\newcommand{\set}[1]{\{ #1 \}}
\numberwithin{equation}{section}
\newtheorem{theorem}{Theorem}
\newcommand{\Setlineno}[1]{\setcounter{ALG@line}{\numexpr#1-1}}
\newcommand{\hStatex}[0]{\vspace{5pt}}
\newcommand{\algvariable}[1]{\texttt{#1}}
\newcommand{\algkeyword}[1]{\textbf{#1}}
\begin{document}

\title{Almost Strong Consistency: ``Good Enough'' in Distributed Storage
Systems}

\numberofauthors{4}

\author{
\alignauthor
Hengfeng Wei\\
       \affaddr{State Key Laboratory for Novel Software Technology}\\
       \affaddr{Nanjing University, China}\\
       \email{hengxin0912@gmail.com}
\alignauthor
Yu Huang\\
       \affaddr{State Key Laboratory for Novel Software Technology}\\
       \affaddr{Nanjing University, China}\\
       \email{yuhuang@nju.edu.cn}
\alignauthor
Jiannong Cao\\
       \affaddr{Hong Kong Polytechnic University}\\
       \affaddr{Hong Kong, China}\\
       \email{csjcao@comp.polyu.edu.hk}
\and  
\alignauthor
Jian Lu\\
       \affaddr{State Key Laboratory for Novel Software Technology}\\
       \affaddr{Nanjing University, China}\\
       \email{lj@nju.edu.cn}
}

\maketitle

\begin{abstract}
  A consistency/latency tradeoff arises as soon as a distributed storage system
  replicates data.
  For low latency, modern storage systems often settle for weak consistency
  conditions,
  which provide little, or even worse, no guarantee for data consistency.
  In this paper we propose the notion of \emph{almost strong consistency} as a
  better balance option for the consistency/latency tradeoff.
  It provides both deterministically bounded staleness of data versions
  for each \textsl{read} and probabilistic quantification on the rate of
  ``reading stale values'', while achieving low latency.
  In the context of distributed storage systems, we investigate almost strong
  consistency in terms of \mbox{\emph{2-atomicity}}.
  Our 2AM (\mbox{2-Atomicity} Maintenance) algorithm completes both
  \textsl{reads} and \textsl{writes} in \emph{one} communication round-trip,
  and guarantees that each \textsl{read} obtains the value of within the latest
  2 versions.
  To quantify the rate of ``reading stale values'', we decompose the
  so-called ``old-new inversion'' phenomenon into concurrency patterns and
  \mbox{read-write} patterns, and propose a stochastic queueing model and a
  timed balls-into-bins model to analyze them, respectively.
  The theoretical analysis not only demonstrates that ``old-new inversions''
  rarely occur as expected, but also reveals that the \mbox{read-write} pattern
  dominates in guaranteeing such rare data inconsistencies.
  These are further confirmed by the experimental results, showing that
  \mbox{2-atomicity} is ``good enough'' in distributed storage systems
  by achieving low latency, bounded staleness, and rare data inconsistencies.
\end{abstract}

\vspace{-0.7pt}
\section{Introduction} \label{section:introduction}

Distributed storage systems \cite{bigtable-Chang06-osdi}
\cite{amazon-Vogels07-sosp} \cite{pnuts-Cooper08-vldb}
\cite{facebook-Beaver10-osdi} are considered as integral and fundamental
components of modern Internet services such as e-commerce and social networks.
They are expected to be fast, always available, highly scalable, and
network-partition tolerant.
To this end, modern distributed storage systems typically replicate their data
across different machines and even across multiple data centers, at the expense
of introducing data inconsistency.

More importantly, as soon as a storage system replicates data, a tradeoff
between consistency and latency arises \cite{Abadi12}.
This consistency/latency tradeoff arguably has been highly influential in
system design because it exists even when there are no network partitions
\cite{Abadi12}.
In distributed storage systems, latency is widely regarded as a critical factor
for a large class of applications.
For example, the experiments from Google \cite{Google09} demonstrate that
increasing web search latency 100 to 400 ms reduces the daily number of
searches per user by 0.2\% to 0.6\%.
Therefore, most storage systems (and applications built on them) are designed
for low latency in the first place.
They often sacrifice strong consistency and settle for weaker ones,
such as eventual consistency \cite{amazon-Vogels07-sosp}, per-record timeline
consistency \cite{pnuts-Cooper08-vldb}, and causal consistency \cite{Lloyd11}.
However, such weak consistency models usually provide little, or even worse, no
guarantee for data consistency.
More specifically, they neither make any deterministic guarantee on the
staleness of the data returned by \textsl{reads} nor provide probabilistic
hints on the rate of violations with respect to the desired strong consistency.

In this paper we propose the notion of \emph{almost strong consistency} as a
better balance option for the consistency/latency tradeoff.
The implication of the term ``almost'' is twofold.
On one hand, it provides deterministically bounded staleness of
data versions for each \textsl{read}.
Thus, the users are confident that out-of-date data is still useful
as long as they can tolerate certain staleness.
On the other hand, it further provides probabilistic quantification on the
low rate of ``reading stale values''.
This ensures that the users are actually accessing up-to-date data most of the
time.

We illustrate the idea of almost strong consistency by an exemplar
mobile-app-based taxi transportation system.
In this system, each taxi periodically reports its location data to the data
server.
Due to the natural locality of the update and request of location data, the city
is partitioned into multiple areas and a data server is deployed in each area.
The location data is replicated among all the data servers.
Thus the users all over the city can request the location data via a mobile
application like Uber \cite{Uber}.
Though consistency is a desirable property, in this application the user may be
more concerned of how long he has to wait before his query can be served.
We argue that the application may be willing to trade certain consistency for
low latency, as long as the inconsistency is bounded and the application can
still access up-to-date data most of the time \cite{YuHuang-tpds10}.


In the context of distributed storage systems, we investigate almost strong
consistency in terms of \mbox{\emph{2-atomicity}}.
By instantiating the abstract notion of almost strong consistency, the
2-atomicity semantics also includes two essential parts as elaborated below.

First, the \mbox{2-atomicity} semantics guarantees that the value returned by
each \textsl{read} is one of the latest 2 \emph{versions},
besides admitting an implementation with low latency.
By ``low latency'' we mean that both \textsl{reads} and \textsl{writes} complete
in \emph{one} communication round-trip.
Theoretically, it has been proved impossible to achieve low latency while
enforcing each \textsl{read} to return the \emph{latest} data version,
as required in atomicity \cite{interprocess-Lamport86-dc}
\cite{linearizability-Herlihy90-toplas}, given that a minority of replicas may
fail \cite{Dutta04}.
For example, the ABD algorithm \cite{abd-Attiya95-jacm} for emulating atomic
registers requires each \textsl{read} to complete in two \mbox{round-trips}.
This impossibility result justifies the relaxed consistency semantics of
2-atomicity.
In the transportation system example above, the taxi location data can still be
useful if the data returned is no more stale than the previous version to the
latest one.
This is mainly because the location data cannot change abruptly and the taxi
frequently updates its location in this scenario.

Second, the \mbox{2-atomicity} semantics provides probabilistic quantification
on the rate of violations of atomicity.
In data storage systems, atomicity is widely used as the formal definition of
strongly consistent or up-to-date data access.
By bounding the probability of violating atomicity, the \mbox{2-atomicity}
semantics provides another orthogonal perspective for expressing how strong
consistency is ``almost" guaranteed.
In our example above, since the user may request the location data of a number
of taxies, the inconsistency data may not affect the quality of service
experienced by the user, as long as only a small portion of the query return
slightly stale data.

Our 2AM (2-Atomicity Maintenance) algorithm for maintaining 2-atomicity in
distributed storage systems completes both \textsl{reads} and \textsl{writes} in
\emph{one} communication round-trip, and guarantees that each \textsl{read}
obtains the value of within the latest 2 versions.
To quantify the rate of ``reading stale values'', we decompose the so-called
``old-new inversion'' phenomenon into two patterns: concurrency pattern and
\mbox{read-write} pattern.
We then propose a stochastic queueing model and a timed balls-into-bins
model to analyze the two patterns, respectively.
The theoretical analysis not only demonstrates that ``old-new inversions''
rarely occur as expected, but also reveals that the \mbox{read-write} pattern
dominates in guaranteeing such rare violations.

We have also implemented a prototype data storage system among mobile phones,
which provides \mbox{2-atomic} data access based on the 2AM algorithm and atomic
data access based on the ABD algorithm.
The \textsl{read} latency in our 2AM algorithm has been significantly reduced,
compared to that in the ABD algorithm.
More importantly, the experimental results have confirmed our theoretical
analysis above.
Specifically, the proportion of old-new inversions incurred in the 2AM
algorithm is typically less than 0.1\text{\textperthousand}, and the proportion
of read-write patterns among concurrency patterns (e.g., about
0.1\text{\textperthousand} in some setting) is much less than that of
concurrency patterns themselves (e.g., more than $50\%$ in the same setting).
Thus \mbox{2-atomicity} is ``good enough'' in distributed storage systems
by achieving low latency, bounded staleness, and rare atomicity violations.

The remainder of the paper is organized as follows.
Section \ref{section:almost-strong-consistency} proposes the notion of almost
strong consistency and discusses how to define it in terms of
\mbox{2-atomicity} in the context of distributed storage systems.
Section \ref{section:achieve-asc} presents the 2AM (2-Atomicity
Maintenance) algorithm which achieves deterministically
bounded staleness.
Section \ref{section:quantifying} is concerned with the theoretical analysis of
the atomicity violations incurred in the 2AM algorithm.
Section \ref{section:experiment} presents the prototype data storage
system and experimental results.
Section \ref{section:related-work} reviews the related work.
Section \ref{section:conclusion} concludes the paper.
\section{Almost Strong Consistency} \label{section:almost-strong-consistency}

In this section, we propose the notion of \emph{almost strong consistency}, and
instantiate it in terms of \mbox{\emph{2-atomicity}}, in the context
of distributed storage systems.
\subsection{Generic Notion of Almost Strong Consistency}
\label{subsection:generic-asc}

The distributed storage system consists of an arbitrary number of $N$
\emph{clients} and a fixed number $n$ of server replicas (or \emph{replicas},
for short) that communicate through message-passing (Figure \ref{fig:system-model}).
Each replica maintains a set of replicated key-value pairs
(also referred to as \emph{registers} in the sequel).

\begin{figure}[!t]
  \centering
  \includegraphics[width = 0.30\textwidth]{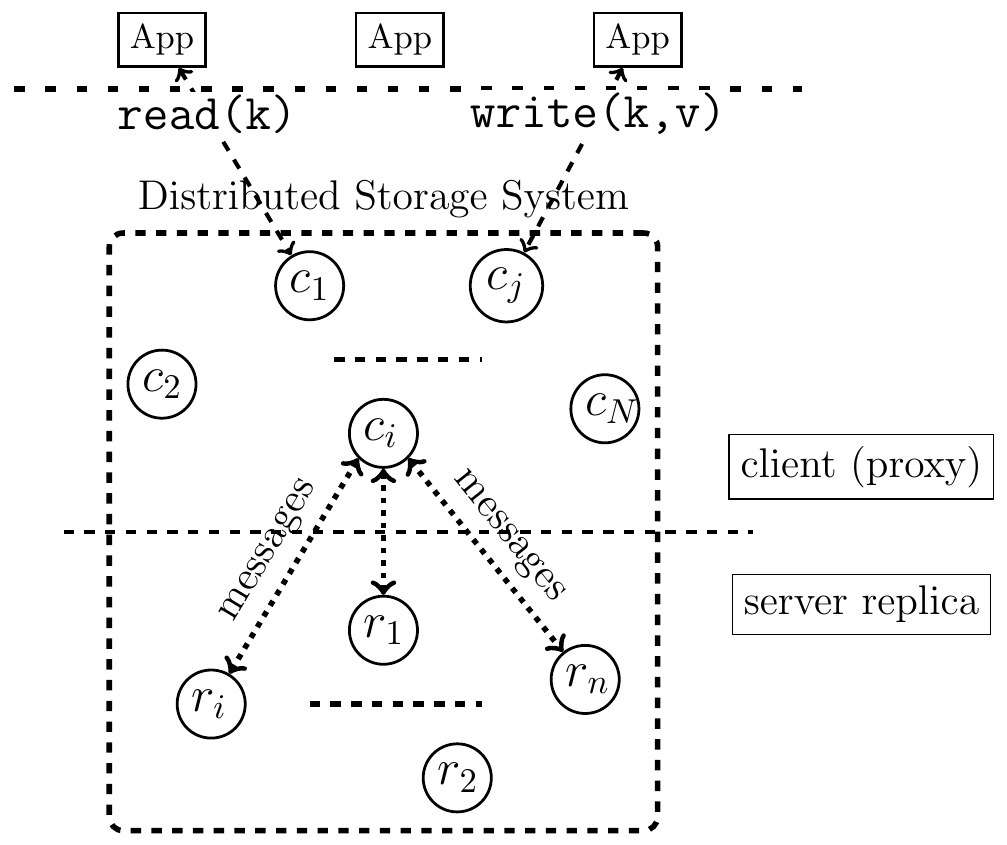}
  \caption[labelInTOC]{Distributed storage system model.}
  \label{fig:system-model}
\end{figure}

The distributed storage system supports two \emph{operations} to upper-layer
applications:
\itshape 1)\upshape \; storing a value associated with a key, denoted
\textsl{write(key,value)}; \emph{and}
\itshape 2)\upshape \; retrieving a value associated with a key, denoted
\textsl{value $\gets$ read(key)}.
Clients serve as the proxies for applications by invoking \textsl{read/write}
operations on the registers and communicating with replicas on
behalf of them.
Being replicated, different versions of the same register may co-exist.
The concept of consistency models is then introduced to constrain the possible
data versions that are allowed to be returned by each \textsl{read}.
Particularly, strong consistency requires each \textsl{read} to obtain
the latest data version according to some sequential order.

The notion of \emph{almost strong consistency} generalizes the traditional
strong consistency by allowing stale data versions to be read. That is,
\begin{enumerate}
  \item It provides deterministically bounded staleness of data versions for
  each \textsl{read};
  \item It provides probabilistic quantification on the rate of
  ``reading stale values''.
\end{enumerate}
\subsection{Almost Strong Consistency in Terms of 2-Atomicity}
\label{subsection:asc-2atomicity}

In the context of distributed storage systems, we investigate almost strong
consistency in terms of \mbox{\emph{2-atomicity}}.
As preliminaries, we first review atomicity \cite{dc-Attiya04-book}.
From the view of clients, each operation is associated with two events: an
\emph{invocation} event and a \emph{response} event.
For a \textsl{read} (on a specific key), the invocation is denoted
\textsl{read(key)}, and its response has the form \textsl{ack(value)}, returning
some value to the client.
For a \textsl{write}, the invocation is denoted \textsl{write(key, value)}, and
its response is an \textsl{ack}, indicating its completion.
We assume an imaginary global clock and all the events are time-stamped with
respect to it \cite{interprocess-Lamport86-dc}.
Among all the \textsl{writes}, we posit, for each register, the existence of a
special one which writes the \emph{initial value}, at the very beginning of the
imaginary global clock.
%

An \emph{execution} $\sigma$ of the distributed storage system is a sequence of
invocations and responses.
An operation $o_1$ \emph{precedes} another operation $o_2$, denoted
$o_1 \prec_{\sigma} o_2$ (or $o_1 \prec o_2$ if $\sigma$ is clear or
irrelevant), if and only if the response of $o_1$ occurs in $\sigma$ before the
invocation of $o_2$.
Two operations are considered \emph{concurrent} if neither of them precedes the other.
An execution $\sigma$ is said \emph{well-formed} if each client invokes at most
one operation at a time, that is, for each client $p_i$, $\sigma|i$ (the
subsequence of $\sigma$ restricted on $p_i$) consists of alternating invocations
and matching responses, beginning with an invocation.
A well-formed execution $\sigma$ is \emph{sequential} if for each operation
in $\sigma$, its invocation is immediately followed by its response.


Intuitively, atomicity requires each operation to appear to take effect
instantaneously at some point between its invocation and its response.
More precisely,

\begin{definition}	\label{def:atomicity}
  A storage system satisfies \textbf{atomicity} \cite{dc-Attiya04-book} if, for
  each of its well-formed executions $\sigma$, there exists a permutation $\pi$
  of all the operations in $\sigma$ such that $\pi$ is sequential and
  \begin{itemize}
    \item $[$real-time requirement$]$ If $o_1 \prec_{\sigma} o_2$, then $o_1$
    appears before $o_2$ in $\pi$; \emph{and}
    \item $[$read-from requirement$]$ Each \textsl{read} returns the value
    written by the most recently preceding \textsl{write} in $\pi$ on the same
    key, if there is one, and otherwise returns its initial value.
  \end{itemize}
\end{definition}

The semantics of 2-atomicity is adapted from that of atomicity by relaxing its
read-from requirement to allow stale values to be read.

\begin{definition}	\label{definition:2-atomicity}
  A storage system satisfies \textbf{2-atomicity} if, for each of its
  well-formed executions $\sigma$, there exists a permutation $\pi$ of all the
  operations in $\sigma$ such that $\pi$ is sequential and
  \begin{itemize}
    \item $[$real-time requirement$]$ If $o_1 \prec_{\sigma} o_2$, then $o_1$
    appears before $o_2$ in $\pi$; \emph{and}
    \item $[$weak read-from requirement$]$ Each \textsl{read} returns the value
    written by one of the latest two preceding \textsl{writes} in $\pi$ on the
    same key.
  \end{itemize}
\end{definition}

In terms of \mbox{2-atomicity}, the notion of almost strong consistency can
then be instantiated as follows.
\begin{enumerate}
  \item Besides admitting an implementation with low latency, it
  guarantees that each \textsl{read} obtains the value of within the latest
  2 versions;
  \item It provides probabilistic quantification on the rate of
  actually reading the stale data version.
\end{enumerate}

Sections~\ref{section:achieve-asc} and \ref{section:quantifying} are concerned
with these two aspects, respectively.
\section{Achieving 2-Atomicity}
\label{section:achieve-asc}

In this section, we present the 2AM (\mbox{2-Atomicity} Maintenance) algorithm
for emulating \mbox{2-atomic}, Single-Writer Multi-Reader (SWMR)
registers.
It completes both \textsl{reads} and \textsl{writes} in \emph{one} round-trip,
and guarantees that each \textsl{read} obtains the value of within the latest 2
versions.

Despite its simplicity, SWMR registers are useful in a wide range of
applications, especially where the shared data has its natural ``owner".
Moreover, multiple SWMR registers can be used in group.
The typical setting is that each process has its ``own" register i.e., only the
owner process can write this register, while all processes can read all registers.
Multiple processes can communicate with each other by writing its own register
and reading other registers.
A possible alternative is to use Multi-Writer Multi-Reader (MWMR, for short)
registers.
Compared with using MWMR registers, using SWMR register in group may be more
compatible with the application logic, and the implementation is less complex
and has better maintainability.

\subsection{The 2AM (2-Atomicity Maintenance) Algorithm}
\label{subsection:algorithm-2atomicity}

We use the asynchronous, non-Byzantine model, in which:
\itshape 1)\upshape \;Messages can be delayed, lost, or delivered out of order,
but they are not corrupted; \emph{and}
\itshape 2)\upshape \;An arbitrary number of clients may crash while only a
minority of replicas may crash.

The 2AM algorithm is an adaptation from that for atomicity
\cite{abd-Attiya95-jacm}.
It makes use of versioning.
Specifically, for each \textsl{write(key, value)}, the writer associates a
\emph{version} with the key-value pair.
Each replica replaces a key-value pair it currently holds whenever a larger
version with the same key is received.
When reading from a key, a client tries to retrieve the value with the largest
version.
Since there is only one writer, versions (for each key) can be chosen totally
ordered using its local sequence numbers.

At its core, the algorithm is stated in terms of the \emph{majority quorum
systems} in the way that each operation is required to contact any majority of
the replicas to proceed. Specifically,

\begin{itemize}
  \item \textsl{write(key, value):} To write a value on a specific
  key, the single writer first generates a larger version than those it has ever
  used, associates it with the key-value pair, sends the versioned
  key-value pair to all the replicas, and waits for acknowledgments from a
  majority of them.
  \item \textsl{read(key):} To read from a specific key, the reader
  first queries and collects a set of versioned key-value pairs from a majority
  of the replicas, from which it chooses the one with the largest version to
  return.
\end{itemize}

As mentioned before, each replica replaces its key-value pair whenever a larger
version with the same key from a \textsl{write} is received. Besides, it
responds to the queries from \textsl{reads} with the versioned key-value pair
it currently holds.

The pseudo-code for \textsl{read} and \textsl{write} operations and the
replicas appears in Algorithm~\ref{alg:2-atomicity}.
Notice that the \textsl{read} here does not spend a second \mbox{round-trip}
propagating the returned value (along with its version) to a majority of the
replicas, in contrast to that in \cite{abd-Attiya95-jacm}.
The second round-trip in \cite{abd-Attiya95-jacm} (often referred to as the
``write back'' phase) is required to avoid the ``old-new inversion''
phenomenon \cite{Dutta04}, \cite{Attiya10}.
An ``old-new inversion'' witnesses a violation of atomicity, where two
non-overlapping \textsl{reads}, both overlapping a \textsl{write}, obtain
out-of-order values.
In the 2AM algorithm, we have intentionally ignored the ``write
back'' phase.
In the following subsection, we prove that the 2AM algorithm
indeed achieves the emulation of \mbox{2-atomic}, single-writer multi-reader
registers.
\begin{algorithm}[!t]
  \caption{The 2AM (2-Atomicity Maintenance) algorithm for \mbox{2-atomic},
  single-writer multi-reader registers.}
  \label{alg:2-atomicity}
  \begin{algorithmic}[1]
    \Procedure{write}{\algvariable{key,value}} \Comment{for the writer}
      \State increment \algvariable{version} for this \algvariable{key}

      \hStatex
      \State \algkeyword{pfor each} replica $s$	\Comment{\algkeyword{pfor} is a
      parallel for}
      \State \quad send $[\textsc{update} \algvariable{,key,value,version}]$ to
      $s$
      \State \algkeyword{wait for} $[\textsc{ack}]$s from a majority of replicas
    \EndProcedure

    \hStatex
    \Setlineno{1}
    \Procedure{read}{\algvariable{key}}	\Comment{for each reader}
      \State $results \leftarrow \emptyset$

      \hStatex
      \State \algkeyword{pfor each} replica $s$
        \State \quad send $[\textsc{query} \algvariable{,key}]$ to $s$
        \State \quad obtain $result \leftarrow
        [\algvariable{k,val,ver}]$ from $s$
        \State \quad $results \leftarrow results \cup \{
        result \}$ \State \algkeyword{until} a majority of replicas respond
      \hStatex

      \State \Return \algvariable{val} with the largest \algvariable{ver}
      in $results$
    \EndProcedure

    \hStatex
    \Statex $[\algvariable{k,val,ver}]:$ local versioned key-value pairs

    \hStatex
	\noindent $\triangleright$ The following procedure is executed in an
	uninterrupted way.
	Assume that \algvariable{msg} is from client $p_i$.
    \Setlineno{1}
    \Procedure{Upon}{\algvariable{msg}} \Comment{for each replica}
      \NoThenIf{\algvariable{msg} \algkeyword{instanceof} $[\textsc{query}
      \algvariable{,key}]$}
      \State \hspace{-8pt} send $[\algvariable{k,val,ver}]$ with
      $\algvariable{k} = \algvariable{key}$ to client $p_i$
      \EndIf
      \NoThenIf{\algvariable{msg} \algkeyword{instanceof} $[\textsc{update}
      \algvariable{,key,value,version}]$}
        \State pick $[\algvariable{k,val,ver}]$ with $\algvariable{k}
        = \algvariable{key}$
        \NoThenIf{$\algvariable{ver} < \algvariable{version}$}
          \State $\algvariable{val} \leftarrow \algvariable{value}$
          \State $\algvariable{ver} \leftarrow \algvariable{version}$
        \EndIf
        \State send [\textsc{ack}] to client $p_i$
      \EndIf
    \EndProcedure
  \end{algorithmic}
\end{algorithm}

\subsection{Correctness Proof of the 2AM Algorithm}
\label{subsection:correctness}

We aim to prove that, in the 2AM algorithm, the value returned
by each \textsl{read} is of one of the latest 2 versions.
It is basically a case-by-case analysis, concerning the partial order among and
the semantics of the \textsl{read/write} operations.

\begin{theorem}	\label{theorem:2-atomicity}
  The 2AM algorithm achieves the emulation of \mbox{2-atomic}, single-writer
  multi-reader registers.
\end{theorem}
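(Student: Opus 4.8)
The plan is to exhibit, for an arbitrary well-formed execution $\sigma$ of the 2AM algorithm, a sequential permutation $\pi$ of all its operations that witnesses 2-atomicity in the sense of Definition~\ref{definition:2-atomicity}. Since 2-atomicity, like atomicity, is a local (composable) property, it suffices to treat one SWMR register at a time, so fix one register. Let $w_0 \prec_{\sigma} w_1 \prec_{\sigma} w_2 \prec_{\sigma} \cdots$ be its \textsl{writes}: the single, sequential writer stamps them with strictly increasing versions $0, 1, 2, \ldots$ ($w_0$ being the initial write, which precedes everything), so the \textsl{writes} are already totally ordered by $\prec_{\sigma}$ and this order agrees with version order. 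For each \textsl{read} $r$ let $\mathrm{ver}(r)$ be the version of the value $r$ returns; by uniqueness of versions, $r$ returns the value of $w_{\mathrm{ver}(r)}$, and moreover the value reached $r$ from some replica that had received it, so $w_{\mathrm{ver}(r)}$ was invoked before $r$ responded.

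First I would establish three quorum lemmas. (L1) If $w_i \prec_{\sigma} r$ then $\mathrm{ver}(r) \ge i$: the response of $w_i$ means a majority of replicas hold version $\ge i$, replica versions never decrease, and $r$'s query quorum intersects that majority, so $r$ sees — and returns the maximum of — a version $\ge i$. (L2) If $r \prec_{\sigma} w_i$ then $\mathrm{ver}(r) \le i-1$: $w_{\mathrm{ver}(r)}$ was invoked before $r$ responded, hence before $w_i$ was invoked, and sequentiality of the writer forces $\mathrm{ver}(r) < i$. (L3) If $r_1 \prec_{\sigma} r_2$ then $\mathrm{ver}(r_2) \ge \mathrm{ver}(r_1) - 1$: writing $v = \mathrm{ver}(r_1)$, $w_v$ was invoked before $r_1$ responded, and since the writer is sequential $w_{v-1}$ had responded before $w_v$ was invoked; chaining these facts with $r_1 \prec_{\sigma} r_2$ gives $w_{v-1} \prec_{\sigma} r_2$, so (L1) yields $\mathrm{ver}(r_2) \ge v-1$. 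An immediate corollary of L3 is that no \textsl{read} has a $\prec_{\sigma}$-earlier \textsl{read} whose version exceeds its own by $2$ or more.

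Next I would build $\pi$. Take the spine $w_0, w_1, w_2, \ldots$. Call a \textsl{read} $r$ \emph{promoted} if some \textsl{read} $r'$ with $r' \prec_{\sigma} r$ has $\mathrm{ver}(r') = \mathrm{ver}(r)+1$; by the corollary to L3, a promoted \textsl{read} is promoted by exactly one level, and such a witness $r'$ certifies that $w_{\mathrm{ver}(r)+1}$ exists. Put each \textsl{read} $r$ in \emph{slot} $\mathrm{ver}(r)$ if it is not promoted and in slot $\mathrm{ver}(r)+1$ if it is; then form $\pi$ by listing, for $i = 0, 1, 2, \ldots$, first $w_i$ and then all \textsl{reads} of slot $i$ in some linear extension of $\prec_{\sigma}$ restricted to them. (Every used slot $i$ indeed contains $w_i$, by the two existence remarks above.)

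Finally I would verify the two requirements of Definition~\ref{definition:2-atomicity}. For the real-time requirement it suffices to show $o_1 \prec_{\sigma} o_2 \Rightarrow \mathrm{slot}(o_1) \le \mathrm{slot}(o_2)$, then break ties within a slot: write--write is the spine; $w_i \prec_{\sigma} r$ uses L1; $r \prec_{\sigma} w_i$ uses L2, plus the observation that if such an $r$ were in slot $i$ it would be promoted by some $r'$ with $r' \prec_{\sigma} r \prec_{\sigma} w_i$ and $\mathrm{ver}(r') = i$, contradicting L2 applied to $r' \prec_{\sigma} w_i$; read--read uses L3, and in the borderline subcase $\mathrm{ver}(r_2) = \mathrm{ver}(r_1)-1$ one checks $r_1$ cannot be promoted (a witness $r_0$ would give $r_0 \prec_{\sigma} r_2$ with $\mathrm{ver}(r_0) = \mathrm{ver}(r_2)+2$, contradicting the corollary to L3), so $r_2$ lands in $r_1$'s slot and is listed after it. Inside a single slot the only $\prec_{\sigma}$-edges are the slot's \textsl{write} before its \textsl{reads} (consistent, since $r \prec_{\sigma} w_i$ is impossible for $r$ in slot $i$) and read--read edges (respected by the chosen linear extension). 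For the weak read-from requirement, a \textsl{read} in slot $i$ has $\mathrm{ver}(r) \in \{i-1, i\}$ and returns the value of $w_{\mathrm{ver}(r)}$, while the two \textsl{writes} most recently preceding it in $\pi$ are exactly $w_i$ and $w_{i-1}$; hence it returns the value of one of the latest two preceding \textsl{writes}. The main obstacle is L3 together with the promotion bookkeeping it supports: L3 is precisely the quantitative fact that the only atomicity violations 2AM can produce are ``off by one version,'' and arranging $\pi$ so that it absorbs every old-new inversion without ever displacing a \textsl{read} more than one slot past $w_{\mathrm{ver}(r)}$ and without breaking the real-time order is the delicate part; the quorum-intersection lemmas L1--L2 and the composition across registers are routine.
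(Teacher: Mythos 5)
Your proposal is correct, and it builds what is in essence the same permutation as the paper, but it gets there by a noticeably different route. The paper orders the \textsl{writes} by version and then schedules the \textsl{reads} greedily in invocation-time order, placing each \textsl{read} immediately after the \textsl{write} it read from and after all $\prec_{\sigma}$-preceding \textsl{reads}; it declares the real-time requirement ``obvious'' and spends its effort on a case analysis (Cases 1, 2.1, 2.2.1, 2.2.2) whose payoff is the identification of the old-new inversion as the \emph{only} pattern in which a \textsl{read} is displaced past one extra \textsl{write} --- a characterization the paper then reuses verbatim in Section 4 to quantify atomicity violations. You instead isolate three version lemmas (L1: quorum intersection forces $\mathrm{ver}(r)\ge i$ after $w_i$; L2: writer sequentiality forces $\mathrm{ver}(r)<i$ before $w_i$; L3: $\mathrm{ver}(r_2)\ge \mathrm{ver}(r_1)-1$ for $r_1\prec_{\sigma} r_2$) and turn them into an explicit slot/promotion assignment, verifying the real-time requirement case by case --- precisely the part the paper skips, and where your ``promoted'' reads are exactly the inversion victims of the paper's Case 2.2.2, bounded to a one-slot shift by L3. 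What your approach buys is rigor: L3 is the quantitative heart of the theorem and the paper never states it, relying instead on the informal scheduling description. What the paper's approach buys is the explicit necessary-and-sufficient old-new-inversion condition (Definition 3) needed downstream, which your promotion bookkeeping encodes only implicitly. One small point to tidy in a full write-up: in the read--read case with $\mathrm{ver}(r_2)=\mathrm{ver}(r_1)$ and $r_1$ promoted, you need the (immediate) observation that $r_1$'s promotion witness $r_0$ also witnesses promotion of $r_2$ via $r_0\prec_{\sigma} r_1\prec_{\sigma} r_2$, so the two reads share a slot; this is the same witness-transfer argument you already use for the borderline subcase, so it is an omission of bookkeeping, not a gap.
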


\begin{proof}
  First of all, we notice that \mbox{2-atomicity}, like atomicity, is a local property
  \cite{linearizability-Herlihy90-toplas}. Therefore, we can prove the
  correctness of the 2AM algorithm by reasoning independently about each
  individual register accessed in an execution.
  Without loss of generality, we assume that all the operations involved in
  the following correctness proof are performed on the same register.

  According to the definition of \mbox{2-atomicity}
  (Definition~\ref{definition:2-atomicity}), it suffices to identify a
  permutation $\pi$ of any execution of the 2AM algorithm, and to prove that
  $\pi$ is sequential and satisfies both the ``real-time requirement'' and the
  ``weak read-from requirement''.

  For any execution $\sigma$, we obtain its permutation $\pi$ in the following
  manner:
  \begin{itemize}
    \item All the \textsl{write} operations issued by the single writer are
    totally ordered according to the versions they use.
    \item The \textsl{read} operations are scheduled one by one in order of
    their \emph{invocation time:}
    A \textsl{read} $r$ that reads from a \textsl{write} $w$ is scheduled
    immediately after both $w$ and all the \textsl{read} operations preceding
    $r$ in the sense of $\prec_{\sigma}$ (which have already been scheduled).
  \end{itemize}

  Obviously, this permutation $\pi$ is sequential and satisfies the ``real-time
  requirement'' of \mbox{2-atomicity}.
  It remains to show that it satisfies the ``weak read-from requirement'' for
  each \textsl{read} as well.
  This argument involves a case-by-case analysis, concerning the partial order
  among and the semantics of the \textsl{read/write} operations.

  Here and in the sequel, we use the following notations:
  For an operation $o$, let $o_{st}$ denote its \emph{start time} (i.e., the
  time of its invocation event), $o_{ft}$ its \emph{finish time} (i.e., the time
  of its response event), and $[o_{st}, o_{ft}]$ its \emph{time interval}
  (Figure \ref{fig:old-new-inversion} for an example).
  We also write $r = R(w)$ to denote the \emph{``read-from''} relation in which
  the \textsl{read} $r$ reads from the \textsl{write} $w$.

  For any \textsl{read} operation $r$, we consider two cases according to
  whether there are concurrent \textsl{write} operations with it in the
  execution $\sigma$.

  \textsc{Case 1:} \emph{There is no concurrent \textsl{write} with $r$}.
  According to the 2AM algorithm (Algorithm~\ref{alg:2-atomicity}), especially
  due to the mechanism of the majority quorum systems, the \textsl{read} $r$
  must read from its most recently preceding \textsl{write} $w$, and hence in
  $\pi$, it is scheduled between $w$ and the next \textsl{write}.

  \textsc{Case 2:} \emph{There are concurrent \textsl{writes} with $r$}, among
  which the leftmost one is denoted $w$. Notice that $r_{st} \in [w_{st},
  w_{ft}]$ holds for $w$. There are two sub-cases according to the
  \textsl{write} from which $r$ reads.

  \textsc{Case 2.1:} \emph{$r$ reads from some concurrent \textsl{write}}.
  In this case, $r$ is scheduled in $\pi$ between this \textsl{write} and its
  next one.

  \textsc{Case 2.2:} \emph{$r$ reads from its most recently preceding
  \textsl{write} in $\sigma$ (denoted $w'$)}. Notice that \textsc{Case 2.1} and
  \textsc{Case 2.2} are exhaustive since $r$ cannot read from any earlier
  \textsl{writes} than $w'$ due to the mechanism of the majority quorum systems.
  To form an ``old-new inversion'', there must be at least two \textsl{read}
  operations.
  Therefore, in \textsc{Case 2.2} (shown in Figure~\ref{fig:old-new-inversion})
  we now consider other \textsl{read} operations (than $r$).

  \textsc{Case 2.2.1:} \emph{There is no \textsl{read} $r'$ that precedes $r$ in
  $\sigma$ and is concurrent with $w$}. Formally, $\nexists r' : r'_{ft} \in
  [w_{st}, r_{st}]$.
  In $\pi$, $r$ is scheduled between $w'$ and its next \textsl{write} (i.e.,
  $w$).

  \textsc{Case 2.2.2:} \emph{There is some \textsl{read} $r'$ that precedes $r$
  and is concurrent with $w$}. Formally, $\exists r': r'_{ft} \in [w_{st},
  r_{st}]$.

  Furthermore, if $r'$ reads from $w$, we obtain an ``old-new inversion'',
  where two non-overlapping \textsl{reads} (i.e., $r$ and $r'$), both
  overlapping a \textsl{write} (i.e., $w$), obtain out-of-order values.
  The scenario is depicted in Figure~\ref{fig:old-new-inversion}, where the
  dotted, directed arrows denote the read-from relation.

  In this situation, both $r$ and $r'$ are scheduled in $\pi$ between $w$ and
  its next \textsl{write}.
  As a consequence, $r$ reads from $w'$ which is its \emph{second} most recently
  preceding \textsl{write} in $\pi$, meeting the ``weak read-from requirement''
  of \mbox{2-atomicity}.

  Notice that \textsc{Case 2.2.2} (and thus the ``old-new inversion''
  phenomenon) is the \emph{only} case which leads to the violations of
  atomicity.
\end{proof}

\begin{figure}[!t]
  \centering
    \includegraphics[width = 0.35\textwidth]
    {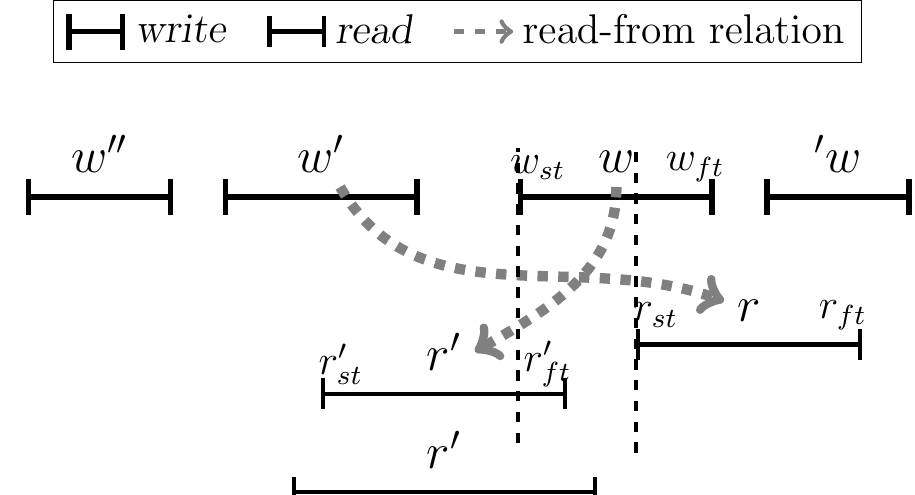}
    \caption[labelInTOC]{Old-new inversion. Two non-overlapping \textsl{reads}
    $r$ and $r'$, both overlapping the \textsl{write} $w$, obtain out-of-order
    values.
    (\emph{Time goes from left to right}.)}
    \label{fig:old-new-inversion}
\end{figure}



\section{Quantifying the Atomicity Violations}
\label{section:quantifying}

In this section, we quantify the atomicity violations incurred in the 2AM
algorithm.
It follows from the correctness proof in Section \ref{subsection:correctness} that
the atomicity violations are exactly characterized by the ``old-new inversions"
in \textsc{Case 2.2.2}.
Furthermore, the proof has also identified the necessary and sufficient
condition for the ``old-new inversions'' phenomenon.
We formally define it as follows.

\begin{definition} \label{def:oni}
  The \textbf{old-new inversion} involving a \textsl{read} $r$ consists of the
  \textsl{read} $r$, two \textsl{writes} $w$ and $w'$, and a second
  \textsl{read} $r'$, such that (see Figure~\ref{fig:old-new-inversion})

  \itshape 1) \upshape $r_{st} \in [w_{st}, w_{ft}]$,
  \itshape 2) \upshape $w'$ immediately precedes $w$: $w' \prec w$, and no other
    \textsl{writes} are between $w$ and $w'$,
  \itshape 3) \upshape $r'_{ft} \in [w_{st}, r_{st}]$,

  \itshape 4) \upshape $r = R(w')$, \emph{and}
  \itshape 5) \upshape $r' = R(w)$.
\end{definition}

The five requirements for ``old-new inversion''  fall into two categories.
The first three requirements involve the partial order $\prec$ on,
and thus the \emph{concurrency patterns} among, \textsl{read/write} operations.
Intuitively, the higher degree of concurrency an execution shows, the more
``old-new inversions'' it may produce.

\begin{definition} \label{def:concurrency-pattern}
The \textbf{concurrency pattern} involving a \\ \textsl{read} $r$ consists of
the \textsl{read} $r$, two \textsl{writes} $w$ and $w'$, and a second
\textsl{read} $r'$, such that
  \begin{enumerate}
    \item $r_{st} \in [w_{st}, w_{ft}]$
    \item $w'$ immediately precedes $w$: $w' \prec w$, and no other
    \textsl{writes} are between $w$ and $w'$
    \item $r'_{ft} \in [w_{st}, r_{st}]$
  \end{enumerate}
\end{definition}

The concurrency pattern itself is not sufficient for old-new
inversion.
Only when the \textsl{read/write} semantics in the last two requirements
of Definition~\ref{def:oni} is also satisfied, does an old-new inversion
arise.
Thus, we define the \mbox{\emph{read-write pattern}} conditioning on a
\emph{concurrency pattern} as follows.

\begin{definition} \label{def:read-write-pattern}
Given a concurrency pattern consisting of $r,r',w, \textrm{and } w'$, exactly as
those in \mbox{Definition~\ref{def:concurrency-pattern}}, the \textbf{read-write
pattern} requires
  \begin{enumerate}
    \setcounter{enumi}{3}
    \item $r = R(w')$
    \item $r' = R(w)$
  \end{enumerate}
\end{definition}

In this way, an ``old-new inversion'' occurs if and only if the read-write
pattern arises given that a corresponding concurrency pattern has emerged.
A concurrency pattern 
may contain more than one such $r'$ defined in
Definition~\ref{def:concurrency-pattern}, as illustrated in
Figure~\ref{fig:old-new-inversion}.
Let $\mathrm{R'}$ be a random variable denoting the number of $r'$s in a
concurrency pattern.
Then, a \mbox{read-write} pattern arises if for \emph{some} $r'$,
Definition~\ref{def:read-write-pattern} is satisfied.
Therefore, the probability of \mbox{``old-new inversions''} conditioning on
$\mathrm{R' = m}$ ($m \geq 1$; $m$ can be as large as the number of all
\textsl{read} operations) is the product of the probability of the concurrency
patterns conditioning on $\mathrm{R' = m}$ and the probability of the read-write
patterns conditioning on $\mathrm{R' = m}$.
By the law of total probability, we obtain
\begin{equation} \label{equ:oni-factors}
  \begin{split}
 \mathbb{P} &\{ \text{violation of atomicity} \} = \mathbb{P} \{ \mathrm{ONI} \}
 \\
 &= \sum_{m \geq 1} \mathbb{P} \{ \mathrm{ONI \mid R' = m} \}
 \\
 &= \sum_{m \geq 1} \mathbb{P} \{ \mathrm{CP \mid R' = m} \}
 \times \mathbb{P} \{ \mathrm{RWP \mid R' = m} \}.
 \end{split}
\end{equation}

In the following two subsections, we propose a stochastic queueing model and a
timed balls-into-bins model to analyze the concurrency pattern and read-write
pattern in Equation~(\ref{equ:oni-factors}), respectively.
The frequently used notations and formulas are summarized in
Table~\ref{tbl:notation-formula}.
\begin{table*}[t!]
  \renewcommand{\arraystretch}{1.2}
  \caption{Notations and formulas.}
  \label{tbl:notation-formula}
  \centering
  \begin{tabular}{|cccc|}
    \hline
    \multicolumn{2}{|c|}{
      $N$: number of clients	\quad
      $n$: number of replicas  \quad
      $q \triangleq \lfloor n/2 \rfloor + 1$
    }&
    \multicolumn{2}{|c|}{
    Beta function: $B(x,y) = \int_{0}^{1} t^{x-1} (1-t)^{y-1} dt$
    }
    \\ \hline
    \multicolumn{2}{|c|}{
      $\lambda$: issue/arrival rate of operations,
      $\mu$: service rate of operations
    }&
    \multicolumn{2}{|c|}{
    $r \triangleq \frac{(2 \lambda + \mu)^2}{2(\mu + \lambda)^2}, \quad
    s \triangleq \frac{1}{2} \frac{\mu}{\mu + \lambda}, \quad
    p_0 \triangleq \frac{1}{2} \left( 1 + (\frac{\lambda}{\mu + \lambda})^2
    \right)$
    }
    \\ \hline
    \multicolumn{2}{|c|}{
      $\lambda_{r}$: rate for \textsl{read} latency,
      $\lambda_{w}$: rate for \textsl{write} latency,
      $\alpha = \frac{\lambda_{r}}{\lambda_{w} + \lambda_{r}}$
    }&
    \multicolumn{2}{|c|}{
	$t = \frac{1}{\lambda}, \qquad t' = \frac{2\lambda - \mu}{2 \lambda \mu}$
    }
    \\ \hline
    \multicolumn{4}{|c|}{\small \hspace{-8pt}
      $\begin{aligned}
		  &J_1 = \lambda_{r} \int_{0}^{t'} e^{-\lambda_{r} (n-q+1) s}
		  \left(1-e^{-\lambda_{r} s}\right)^{q-1} ds
		  \\
		  &+ \sum_{k=0}^{n-q} \frac{\binom{q-1}{k-1}
		  \binom{n-q}{n-q-k}}{\binom{n}{n-q}} \lambda_{r}^{q} \, e^{\lambda_{w} t'}
		  \int_{t'}^{\infty} e^{-(\lambda_{w} + \lambda_{r}) s}
		  \left( \frac{1-e^{-\lambda_{r} t'}}{\lambda_{r}} + e^{\lambda_{w} t'}
		  \frac{e^{-(\lambda_{w} + \lambda_{r}) t'} - e^{-(\lambda_{w} + \lambda_{r})
		  s}}{\lambda_{w} + \lambda_{r}} \right)^{k-1}
		  \left(\frac{1 - e^{-\lambda_{r} s}}{\lambda_{r}}\right)^{q-k}
		  e^{-\lambda_{r}(n-q)s}ds
		  \\
		  &+ \sum_{k=0}^{n-q} \frac{\binom{q-1}{k} \binom{n-q}{n-q-k}}{
		  \binom{n}{n-q}} \lambda_{r}^{q} \, \int_{t'}^{\infty} e^{-\lambda_{r} s}
		  \left( \frac{1-e^{-\lambda_{r} t'}}{\lambda_{r}} + e^{\lambda_{w} t'}
		  \frac{e^{-(\lambda_{w} + \lambda_{r}) t'} - e^{-(\lambda_{w} +
		  \lambda_{r}) s}}{\lambda_{w} + \lambda_{r}} \right)^{k}
		  \left(\frac{1 - e^{-\lambda_{r} s}}{\lambda_{r}}\right)^{q-1-k}
		  e^{-\lambda_{r}(n-q)s} ds.
	  \end{aligned}$
    }
    \\ \hline
\end{tabular}
\end{table*}
\subsection{Quantifying the Rate of Concurrency Patterns}
\label{subsection:quantifying-concurrency-pattern}

To quantify the rate of concurrency patterns conditioning on $\mathrm{R' = m}$, we need
an analytical model of the \emph{workload} consisting of one sequence of
\textsl{read/write} operations from each client.
For each client, the characteristics of its workload are captured by the rate of
operations issued by it and the service time of each operation (i.e.,
$[o_{st}, o_{ft}]$).
We assume a Poisson process with parameter $\lambda$ for the former one and an
exponential distribution with parameter $\mu$ for the latter one.
The scenario of each client issuing a sequence of \textsl{read/write} operations
is then encoded into a queueing model.

We thus consider $N$ independent, parallel $M/M/1$ queues (i.e., a
single-server exponential queueing system), all with arrival rate
$\lambda$ and service rate $\mu$ \cite{Ross10}.
For each $M/M/1$ queue, we use the ``first come first served'' discipline and
assume for simplicity that, if there is any operation in service, no more
operations can enter it.
The queue $Q_0$ represents the single writer.

To compute the probability that a concurrency pattern occurs in such a queueing
system in the long run, we go through the following three steps.

\emph{Step 1: What is the stationary distribution for any two queues?}

Let $X^{i}(t)$ be the number of operations in queue $i$ at time $t$.
Then $X^{i}(t)$ is a continuous-time Markov chain with only two states:
$0$ when the queue is empty and $1$ when some operation is being served.
Its stationary distribution is:
\begin{align*}
  P_0 &\triangleq P \left(X^{i}(\infty) = 0 \right) = \frac{\mu}{\mu + \lambda}
  \\
  P_1 &\triangleq P \left(X^{i}(\infty) = 1 \right) = \frac{\lambda}{\mu +
  \lambda}
\end{align*}
Let $Y(t) = \left(X^{i}(t), Y^{j}(t) \right)$ be the vector of the numbers of
operations in queues $Q_i$ and $Q_j$. Since any two queues are independent, $Y(t)$ is a
continuous-time Markov chain with four states $(0,0), (0,1), (1,0), \text{ and } (1,1)$.
Its stationary distribution is:
\begin{gather*}
  P_{0,0} = \frac{\mu}{\mu + \lambda} \frac{\mu}{\mu + \lambda} =
  \frac{\mu^{2}}{(\mu + \lambda)^{2}}	\\
  P_{0,1} = P_{1,0} = \frac{\mu \lambda}{(\mu + \lambda)^{2}} \quad P_{1,1} =
  \frac{\lambda^{2}}{(\mu + \lambda)^{2}}
\end{gather*}
where,
\begin{displaymath}
  P_{i,j} \triangleq P \left( Y\left(\infty\right) = \left(i,j\right) \right)
  \qquad i, j \in \{ 0, 1 \}.
\end{displaymath}

\emph{Step 2: Given a \textsl{read} $r$ in $Q_i$, what is the probability of the
event, denoted $E$, that it starts during the service period of some
\textsl{write} $w$ in $Q_0$ (formally, $r_{st} \in [w_{st}, w_{ft}]$ in
Definition~\ref{def:concurrency-pattern})?}

The probability of $E$ equals the probability that when $r$ arrives at $Q_i$, it
finds $Q_i$ empty (denoted $E_i$) and as a bystander $Q_0$ full (denoted $E_0$).
Since events $E_i$ and $E_0$ are independent, we have
\begin{align*}
  P(E) &= P(E_i \land E_0) = P(E_i) \cdot P(E_0) \\
  &= P_0 \cdot P_1 \qquad {\text{(by the PASTA property} \cite{Ross10})} \\
  &= \frac{\mu \lambda}{(\mu + \lambda)^{2}}
\end{align*}

%

\emph{Step 3: Conditioning on Step 2, what is the probability of the event,
denoted $E_{N-1,m}$, that there are totally $m$ \textsl{read} operations
(denoted $r'$) in $N-1$ queues (besides $Q_0$) which finish during the time
period $[w_{st}, r_{st}]$ (formally, $r'_{ft} \in [w_{st}, r_{st}]$ in
Definition~\ref{def:concurrency-pattern})?}

First, the length $L = r_{st} - w_{st}$ of the time period $[w_{st}, r_{st}]$
is exactly the inter-arrival time of $Q_i$, which is exponential with rate
$\lambda$.

The calculations in Appendix \ref{appendix:calculation} yield
\begin{align}  \label{equ:concurrency-pattern}
  \mathbb{P} &\{ \mathrm{CP \mid R' = m} \} = \mathbb{P}(E_{N-1,m})
   \nonumber\\
   &= \sum_{k=0}^{N-2} \binom{N-1}{k} \binom{m-1}{N-k-2} p_0^{k}
   r^{N-k-1} s^{m},
\end{align}
when $m \geq 1$.
For the special case $m = 0$, we have
\begin{displaymath}
  \mathbb{P} \{ \mathrm{CP \mid R' = 0} \} = \mathbb{P}(E_{N-1,0}) = p_0^{N-1}.
\end{displaymath}
Summing over $m$ ($m \geq 1$), we also get the probability that there exists a
concurrency pattern (for some \textsl{read} $r$):
\begin{equation}  \label{equ:concurrency-pattern-summation}
  \mathbb{P} \{ \mathrm{CP} \} = 1 - \mathbb{P} \{ \mathrm{CP \mid R' = 0} \} =
  1 - p_0^{N-1}
\end{equation}

\subsection{Quantifying the Rate of Read-Write Patterns}
\label{subsection:quantifying-read-write-pattern}

\begin{figure*}[th]
  \centering
  \includegraphics[width = 0.85\textwidth]
  {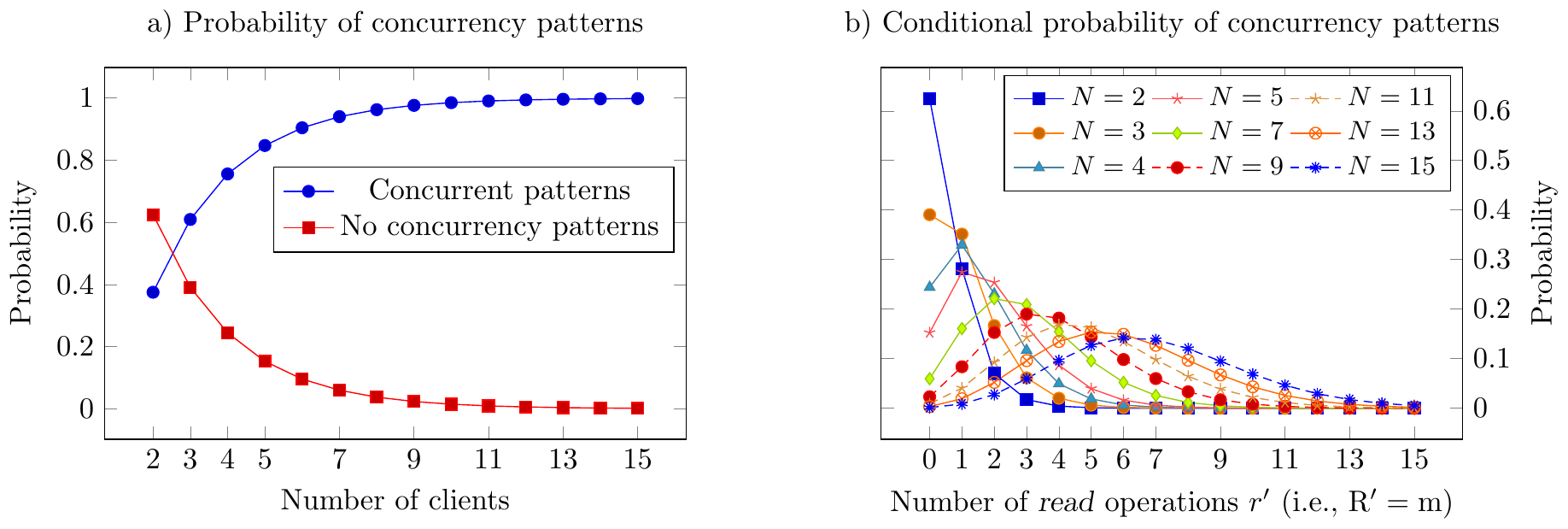}
  \caption[labelInTOC]{The probability of concurrency patterns: a) with vs.
  without concurrency patterns; b) conditioning on $\mathrm{R' = m}$
  ($\lambda = 10 s^{-1}, \mu = 10 s^{-1}$).}
  \label{fig:cp-plot}
\end{figure*}

Given the concurrency patterns, we further quantify the rate of read-write
patterns conditioning on \mbox{$\mathrm{R' = m}$}:
\[
  r = R(w') \land \exists r': r' = R(w)
\]
where $r'$ is among the $m$ \textsl{read} operations in \emph{Step 3} in
Section~\ref{subsection:quantifying-concurrency-pattern}.
To this end, we shall explore in detail the majority quorum systems used in the
2AM algorithm.
We assume that \itshape 1\upshape) no node failure or link failure occurs; and
\itshape 2\upshape) to complete an operation (\textsl{read} or \textsl{write}),
the client accesses all the $n$ replicas and wait for the first $q \triangleq
\lfloor n/2 \rfloor + 1$ acknowledgments from them.
It follows that:
\begin{align} \label{eqn:rwp-in-text}
  \mathbb{P} &\{ \mathrm{RWP \mid R' = m} \}
  \nonumber\\
  &= \mathbb{P} \{ r = R(w') \land \exists r': r' = R(w) \}
  \nonumber\\
  &\leq \mathbb{P} \{ r \neq R(w) \land \exists r': r' = R(w) \}
  \nonumber\\
  &= \mathbb{P} \{ r \neq R(w) \} \times \mathbb{P} \{ \exists r': r' = R(w)
  \mid r \neq R(w) \}
  \\
  &= \mathbb{P} \{ r \neq R(w) \} \times \Big(1 - \mathbb{P} \{ r' \neq R(w)
  \mid r \neq R(w) \}^{m}\Big)  \nonumber
\end{align}
where $r \neq R(w)$ (resp. $r' \neq R(w)$) denotes that $r$ (resp. $r'$) does
not read from $w$.
The inequality is due to the fact that $r = R(w')$ implies $r \neq R(w)$.
We then focus on the calculations of $\mathbb{P} \{ r \neq R(w)
\}$ and $\mathbb{P} \{ r' \neq R(w) \mid r \neq R(w) \}$.

Which \textsl{write} would be read from by some \textsl{read} depends on the
states of the replicas from which it collects the first $\lfloor n/2 \rfloor +
1$ acknowledgments.
The states of the replicas further depend on the timing issues in the 2AM
algorithm, such as message delays and the time lag between the events that the
messages are sent.
Taking into account the timing issues, we propose the timed balls-into-bins
model for the \textsl{read} and \textsl{write} procedures in the 2AM algorithm.
Let $D_{r}$ (resp. $D_{w}$) be a (non-negative) continuous random variable
denoting the message delay for \textsl{read} (resp. \textsl{write}) operations
during a communication round-trip.
Let $T$ be a (non-negative) continuous random variable denoting the time lag
between the time when two messages of interest are sent, and $t$ a
realization (or called an observed value) of $T$.

In the \emph{timed balls-into-bins model}, there are $n$ bins (corresponding to
$n$ replicas).
Consider two robots $R_1$ and $R_2$ (corresponding to \textsl{read}
or \textsl{write} operations) which can produce multiple balls (corresponding
to messages) instantaneously.
At time 0, robot $R_1$ \itshape 1\upshape) produces $n$ balls instantaneously;
\itshape 2\upshape) Immediately these $n$ balls are independently sent to the
$n$ bins, one ball per bin;
\itshape 3\upshape) The delays for the balls going from the robot to its
destination bin are independent and identically distributed with the same
distribution as $D_{r}$ or $D_{w}$ as defined above, depending on whether the
robot represents a \textsl{read} or a \textsl{write}.

At time $t$ (defined above), robot $R_2$ independently does exactly the same
thing as robot $R_1$ does (i.e., \itshape 1\upshape), \itshape 2\upshape), and
\itshape 3\upshape) for robot $R_1$ above).

Each probability to calculate is related to an event in an instantiation of the
timed balls-into-bins model.

To calculate $\mathbb{P} \{ r \neq R(w) \}$, we are concerned with the model in
which the robots $R_1$ and $R_2$ represent the \textsl{write} operation $w'$ and
the \textsl{read} operation $r$ involved in an ``old-new inversion'', respectively.
Furthermore, we assume that the random variable $D_{r}$ (resp. $D_{w}$) for time
delay is exponentially distributed with rate $\lambda_r$ (resp. $\lambda_w$).
The time lag $T$ between the events that $w'$ and $r$ are issued
(meanwhile messages are sent to replicas) corresponds to the time period
$[w_{st}, r_{st}]$.
That is to say, $T$ is an exponential random variable with rate $\lambda$, as
shown in Section~\ref{subsection:quantifying-concurrency-pattern} (See
\emph{Step 3}).
For simplicity, we take the time lag $t$ to be the expectation of $T$, i.e., $t
= \frac{1}{\lambda}$.
Finally, we are interested in the time point $t'$ when exactly $q \triangleq
\lfloor n/2 \rfloor + 1$) of the $n$ bins have received the balls from $R_2$
(i.e., $r$), and denote the set of these $\lfloor n/2 \rfloor + 1$ bins by $B$.
In terms of the timed balls-into-bins model, the case of $r \neq R(w)$
corresponds to the event $E$ that none of the $q \triangleq \lfloor n/2 \rfloor
+ 1$ bins in $B$ receives a ball from $R_1$ (i.e., $w$) before it receives a
ball from $R_2$ (i.e., $r$).

The calculations in Appendix~\ref{appendix:rwp-r-w} yield
\begin{align} \label{eqn:rwp1-in-text}
  \mathbb{P} \{ r \neq R(w) \} = e^{-q \lambda_{w} t} \frac{\alpha^{q}
  B\left(q, \alpha (n-q) + 1 \right)} {B\left(q,n-q+1\right)},
\end{align}
where $\alpha = \frac{\lambda_{r}}{\lambda_{w} + \lambda_{r}}$ and $B$ denotes
the Beta function.

To calculate $\mathbb{P} \{ r' \neq R(w) \mid r \neq R(w') \}$, we
introduce a slightly generalized timed balls-into-bins model.
In the new model, robot $R_2$ picks $p$ ($0 < p \leq n$) bins uniformly at
random (without replacement) and sends a ball to each of them (see
Appendix~\ref{appendix:rwp-generalized-model}).
The calculations in Appendix~\ref{appendix:rwp-rprime-w} yield
\begin{align}	\label{eqn:rwp2-in-text}
  \mathbb{P}\{ r' \neq R(w) \mid r \neq R(w) \} = \left\{
  \begin{array}{ll}
    \frac{J_1}{B(q,n-q+1)} & \textrm{if } n > 2, \\
    1 & \textrm{if } n = 2.
  \end{array}\right.
\end{align}

Substituting Equations~(\ref{eqn:rwp1-in-text}) and (\ref{eqn:rwp2-in-text})
into Equation~(\ref{eqn:rwp-in-text}) gives, for $n > 2$, the rate of read-write
patterns conditioning on $\mathrm{R' = m}$:
\begin{align} \label{equ:read-write-pattern}
  \mathbb{P} &\{ \mathrm{RWP \mid R' = m} \}
  \nonumber\\
  &\leq \mathbb{P} \{ r \neq R(w) \} \times \Big(1 - \mathbb{P} \{ r' \neq R(w)
  \mid r \neq R(w) \}^{m}\Big)
  \nonumber\\
  &\le e^{-q \lambda_{w} t} \frac{\alpha^{q} B\left(q, \alpha (n-q) + 1 \right)}
  {B\left(q,n-q+1\right)}
  \nonumber\\
  &\quad \cdot \left(1- \left(\frac{J_1}{B(q,n-q+1)}\right)^{m}
  \right).
\end{align}
For $n=2$, we have
\begin{displaymath}
  \mathbb{P} \set{\mathrm{RWP \mid R' = m}} = 0.
\end{displaymath}
Notice that $\mathbb{P} \{ \mathrm{RWP \mid R' = 0} \} = 0$ since there are no
concurrency patterns at all.
\subsection{Numerical Results and Discussions} \label{subsection:analysis}

In light of the complicated analytical formulation, we present the numerical
results on \emph{concurrency patterns, read-write patterns, and old-new
inversions}.
The numerical results have not only demonstrated that ``old-new inversions''
(and thus, atomicity violations) rarely occur as expected, but also clearly
revealed that the read-write patterns dominate in guaranteeing such rare
violations.

Figure~\ref{fig:cp-plot} presents the probability of concurrency patterns, given
$\lambda = 10 s^{-1} \textrm{ and } \mu = 10 s^{-1}$, meaning that the
expected arrival rate is 10 operations per second and the expected service time
is 100 ms.
First of all, Figure~\ref{fig:cp-plot}a) shows that the probability of
concurrency pattern is quite high, and it rapidly increases with the number of
clients.
For example when $N=15$, it nearly reaches $1$:
intuitively, for each \textsl{read} $r$, there almost always exist concurrency
patterns involving it.
Figure~\ref{fig:cp-plot}b) further explores the probability of concurrency
patterns conditioning on the number $m$ of \textsl{reads} $r'$ (i.e.,
$\mathbb{P} \{\mathrm{CP \mid R' = m} \}$).
Here $m=0$ indicates that there are no concurrency patterns at all,
corresponding to the (square-marked) line at the bottom in
Figure~\ref{fig:cp-plot}a).
One key observation from Figure~\ref{fig:cp-plot}b) is that the
conditional probability of concurrency patterns concentrates on
the small values of $m$'s, and for each $N$ the value of $m$ which achieves
the maximum is smaller than $N$.
This observation partly justifies the assumption made in the model for
calculating $\mathbb{P}\set{r' \neq R(w) \mid r \neq R(w)}$ (see
Appendix~\ref{appendix:rwp-generalized-model}) that there is at most one such
$r'$ in a single (client) process.


Figure~\ref{fig:rwp-plot}, as well as Table~\ref{table:rwp-probabilities},
presents the probability of read-write patterns,
given $\lambda = 10 s^{-1}, \mu = 10 s^{-1}, \textrm{and } \lambda_{r} =
\lambda{w} = 20 s^{-1}$.
The latter two parameters mean that the expected message delay is 50 ms.
According to Equation~(\ref{equ:read-write-pattern}), we distinguish the
probability $\mathbb{P} \{ r \neq R(w) \}$ from another one $1 - \mathbb{P} \{
r' \neq R(w) \mid r \neq R(w) \}^{m}$ (in Figure~\ref{fig:rwp-plot}, we take the
extreme value of $m=1$), and observe that the former dominates in keeping the
probability of read-write patterns quite low.
The observation that $\mathbb{P} \{ r \neq R(w) \} $ is quite low
has demonstrated the effectiveness of the majority quorum system used in the 2AM
algorithm, under which a \textsl{read} would, with a high probability,
not miss a concurrent \textsl{write} that starts earlier.
In addition, \emph{if} a \textsl{read} $r$ has happened to miss such a
concurrent \textsl{write}, it is still quite likely to avoid an old-new
inversion: $r$ can reasonably infer, from the low values of $1 - \mathbb{P} \{
r' \neq R(w) \mid r \neq R(w) \}$, that the preceding \textsl{reads} $r'$ would
not have read from that \textsl{write} either.

\newcommand*{\thead}[1]{\multicolumn{1}{|c|}{\bfseries #1}}
\begin{table*}[t]
  \renewcommand{\arraystretch}{1.2}
  \caption{Numerical results on the probabilities of $\{r \neq R(w)\}$
  and $1 - \{r' \neq R(w) \mid r \neq R(w)\}$.}
  \label{table:rwp-probabilities}
  \centering
  \begin{tabular}{|>{\boldmath$}c<{$}|>{$}l<{$}|>{$}c<{$}||>{\boldmath$}c<{$}|>{$}c<{$}|>{$}c<{$}|}
    \hline
    \thead{\text{\bfseries \# replicas}} &
    \multicolumn{1}{c|}{$\mathbb{P}\set{r \neq R(w)}$} &
	\begin{tabular}[c]{@{}r@{}}$1 - \mathbb{P} \{r' \neq R(w)$ \\ $\mid r \neq
	R(w)\}$ \end{tabular} &
	\thead{\text{\bfseries \# replicas}} &
    \mathbb{P}\set{r \neq R(w)} &
	\begin{tabular}[c]{@{}r@{}}$1 - \mathbb{P} \{r' \neq R(w)$ \\ $\mid r \neq
	R(w)\}$ \end{tabular}
    \\ \hline \hline
    2 		& 0.00457891				& 1.0
    & 9 	& 8.51249 \times 10^{-6}	& 0.0243758
    \\ \hline
    3 		& 0.00732626 				& 0.0409628
    & 10 	& 7.20025 \times 10^{-7}	& 0.0353241
    \\ \hline
 	4 		& 0.000566572				& 0.0561367
 	& 11 	& 8.89660 \times 10^{-7}		& 0.0203645
    \\ \hline
    5 		& 0.00077461  				& 0.0356626
 	& 12 	& 7.60436 \times 10^{-8}	& 0.0294186
 	\\ \hline
 	6 		& 0.0000628992				& 0.0511399
 	& 13 	& 9.28973 \times 10^{-8}	& 0.0171705
 	\\ \hline
 	7 		& 0.0000813243				& 0.0294467
 	& 14 	& 8.00055 \times 10^{-9}	& 0.0246974
 	\\ \hline
    8 		& 6.77295 \times 10^{-6}	& 0.0426608
    & 15 	& 9.69478 \times 10^{-9}	& 0.0145951
    \\ \hline
  \end{tabular}
\end{table*}

\begin{table*}[!t]
  \renewcommand{\arraystretch}{1.2}
  \caption{Numerical results on the probabilities of
  concurrency patterns, read-write patterns, and old-new inversions.}
  \label{table:oni-probabilities}
  \centering
  \begin{tabular}{|>{\boldmath$}c<{$}|>{$}l<{$}|>{$}l<{$}|>{$}l<{$}||>{\boldmath$}c<{$}|>{$}l<{$}|>{$}l<{$}|>{$}l<{$}|}
    \hline
    \thead{\text{\small \bfseries \# replicas}} &
    \multicolumn{1}{|c|}{$\mathbb{P}\set{\mathrm{CP}}$} &
    \multicolumn{1}{|c|}{$\mathbb{P}\set{\mathrm{RWP} \mid \mathrm{CP}}$} &
    \multicolumn{1}{|c||}{$\mathbb{P}\set{\mathrm{ONI}}$} &
    \thead{\text{\small \bfseries \# replicas}} &
    \multicolumn{1}{|c|}{$\mathbb{P}\set{\mathrm{CP}}$} &
    \multicolumn{1}{|c|}{$\mathbb{P}\set{\mathrm{RWP} \mid \mathrm{CP}}$} &
    \multicolumn{1}{|c|}{$\mathbb{P}\set{\mathrm{ONI}}$}
    \\ \hline \hline
    2 		& 0.28125		 	& 0.		  	& 0.
    & 9 	& 0.9447 			& 7.06025 \times 10^{-6} 	& 7.30744 \times 10^{-7}
    \\ \hline
    3 		& 0.518555 			& 0.00088802 	& 0.000203683
    & 10 	& 0.95874 			& 1.04312 \times 10^{-6} 	& 9.93356 \times 10^{-8}
    \\ \hline
 	4 		& 0.677307 			& 0.000183791 	& 0.0000352958
 	& 11 	& 0.968604 			& 9.37995 \times 10^{-7} 	& 8.16935 \times 10^{-8}
    \\ \hline
    5 		& 0.781222 			& 0.000266569 	& 0.0000437181
 	& 12 	& 0.975675 			& 1.34085 \times 10^{-7} 	& 1.08822 \times 10^{-8}
 	\\ \hline
 	6 		& 0.849318 			& 0.0000450835 	& 6.49226 \times 10^{-6}
 	& 13 	& 0.98085 			& 1.16911 \times 10^{-7} 	& 8.77158 \times 10^{-9}
 	\\ \hline
 	7 		& 0.89429 			& 0.0000478926 	& 6.08721 \times 10^{-6}
 	& 14 	& 0.984717 			& 1.63195 \times 10^{-8} 	& 1.15178 \times 10^{-9}
 	\\ \hline
    8 		& 0.924335 			& 7.43561 \times 10^{-6} 	& 8.53810 \times 10^{-7}
    & 15 	& 0.987662 			& 1.39573 \times 10^{-8} 	& 9.18283 \times 10^{-10}
    \\ \hline
  \end{tabular}
\end{table*}

\begin{figure}[!t]
  \centering
  \includegraphics[width = 0.38\textwidth]{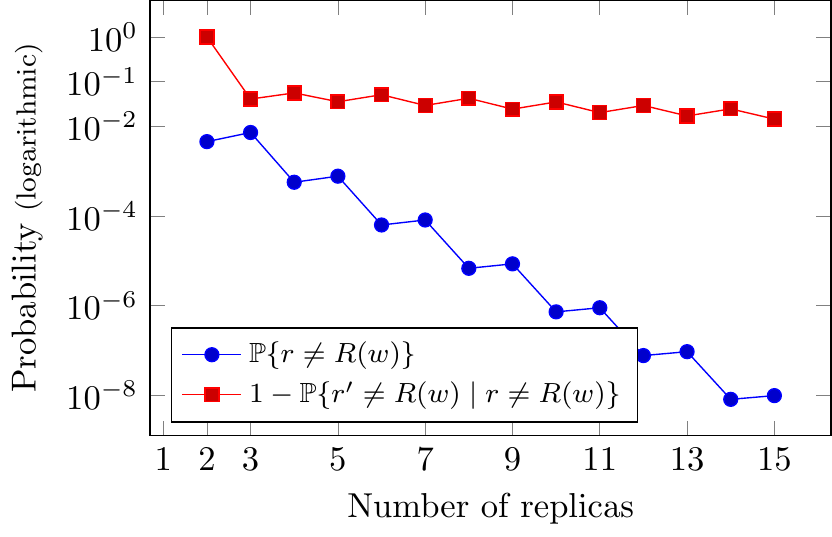}
  \caption[labelInTOC]{The probability of read-write patterns ($\lambda = 10
  s^{-1}, \mu = 10 s^{-1}, \lambda_{r} = \lambda{w} = 20 s^{-1}$).}
  \label{fig:rwp-plot}
\end{figure}

Substituting Equations~(\ref{equ:concurrency-pattern}) and
~(\ref{equ:read-write-pattern}) into Equation (\ref{equ:oni-factors}), we obtain
the rate of violating atomicity:
\begin{align}  \label{eqn:oni}
  \mathbb{P} &\{ \text{violation of atomicity} \} = \mathbb{P} \{\mathrm{ONI}\}
  \nonumber\\
  &= \sum_{m \geq 1} \mathbb{P} \{ \mathrm{CP \mid R' = m} \}
  \times \mathbb{P} \{ \mathrm{RWP \mid R' = m} \}
  \nonumber\\
  &\approx \left( \sum_{k=0}^{N-2} \binom{N-1}{k} \binom{m-1}{N-k-2} p_0^{k}
  r^{N-k-1} s^{m} \right) \nonumber\\
  &\quad \cdot e^{-q \lambda_{w} t} \frac{\alpha^{q} B\left(q, \alpha (n-q)
  + 1 \right)} {B\left(q,n-q+1\right)}
  \nonumber\\
  &\quad \cdot \left(1 - \left(\frac{J_1}{B(q,n-q+1)}\right)^{m}
  \right).
\end{align}

\begin{figure}[!t]
  \centering
  \includegraphics[width = 0.38\textwidth]{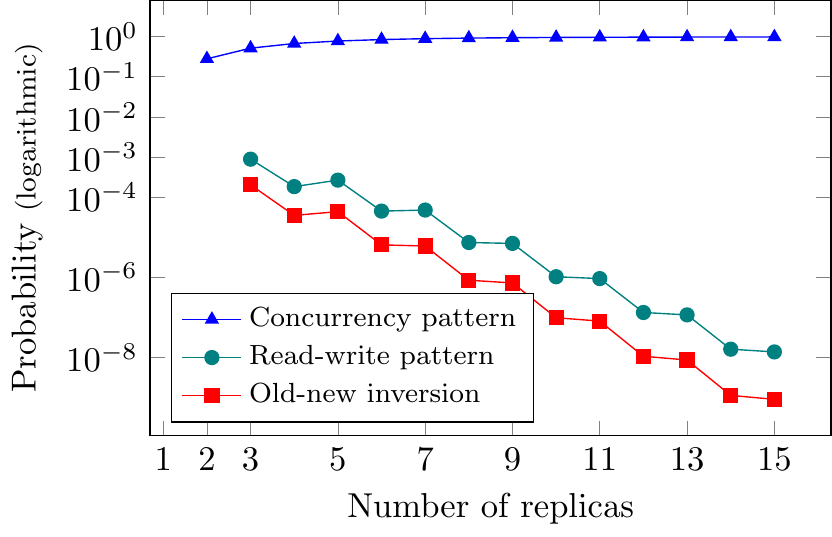}
  \caption[labelInTOC]{The probabilities of concurrency patterns, read-write
  patterns, and old-new inversions ($\lambda = 10 s^{-1}, \mu = 10 s^{-1},
  \lambda_{r} = \lambda{w} = 20 s^{-1}, N = n$).}
  \label{fig:oni-plot}
\end{figure}

Notice that Equation~(\ref{eqn:oni}) is an approximation since the timed
balls-into-bins model used for calculating the probability of read-write
patterns (specifically, for the case of $\set{r' \neq R(w) \mid r \neq R(w)}$ in
Appendix~\ref{appendix:rwp-generalized-model}) assumes that there is at most one
such $r'$ in a single process, while the model for calculating the probability
of concurrency patterns does not.

Figure~\ref{fig:oni-plot}, as well as Table~\ref{table:oni-probabilities},
presents the probability of old-new inversions according to
Equation~(\ref{eqn:oni}) with $N = n$.
We also list the probabilities of concurrency patterns and read-write patterns,
which are calculated as follows:
\begin{align*}
  \mathbb{P}\set{\mathrm{CP}} &= \sum_{m = 1}^{N-1} \mathbb{P}\set{\mathrm{CP}
  \mid \mathrm{R' = m}}
  \\
  \mathbb{P}\set{\mathrm{RWP} \mid \mathrm{CP}} &= \sum_{m = 1}^{N-1}
  \mathbb{P}\set{\mathrm{RWP} \mid \mathrm{R' = m}}.
\end{align*}

Based on Figure~\ref{fig:oni-plot} and Table~\ref{table:oni-probabilities}, we
first observe that the probability of old-new inversions (and thus, atomicity
violations) is sufficiently small, demonstrating that 2-atomicity and the 2AM
algorithm is ``good enough'' in distributed storage systems.
More importantly, it also reveals that the read-write patterns dominate in
guaranteeing such rare violations, compared to the concurrency patterns which
occur quite often.

Notice that the principles underlying our theoretical analysis (as well as the
numerical analysis) have been decoupled from the assumptions we adopt about the
networks and workloads.
These principles mainly consist of the introduction to old-new inversion, the
decomposition of it into concurrency pattern and read-write pattern, the
queueing model for analyzing concurrency patterns, and the timed balls-into-bins
model for analyzing read-write patterns.
Network conditions and workload types may vary in different scenarios.
However, the principles and the methodology of our analysis still apply.


\section{Experiments and Evaluations} \label{section:experiment}

In this section, we empirically study the 2AM algorithm.
To this end, we have implemented a prototype data storage system among mobile
phones, which provides \mbox{2-atomic} data access based on the 2AM algorithm
and atomic data access based on the ABD algorithm.
We compare the \textsl{read} latency in both algorithms.
We also measure the proportion of atomicity violations incurred in the
2AM algorithm.
\subsection{Experimental Design}	\label{subsection:exp-design}

Our prototype system comprises a collection of Google Nexus5 smartphones
(CPU: Qualcomm Snapdragon\texttrademark\ 800, 2.26GHz, Memory: 16GB, Android:
4.4.2), equipped with 72Mbps wireless LAN.
In both algorithms, each phone acts as both a client and a server replica.
As a client, it collects its own execution trace for offline analysis.
Clocks on the phones are synchronized with the same desktop computer.

We explore three kinds of parameters:
\itshape 1\upshape) algorithm parameters: replication factor
(i.e., the number of phones) and consistency levels (i.e.,
atomicity or \mbox{2-atomicity});
\itshape 2\upshape) workload parameters: the number of
\textsl{read}/\textsl{write} operations issued by each client and the issue
rate on each client;
and \itshape 3\upshape) network parameter: the injected random delay in network
communication, modeling the various degrees of asynchrony.

We are concerned with two metrics:

\begin{figure*}[!t]
  \centering
    \includegraphics[width = 0.90\textwidth]{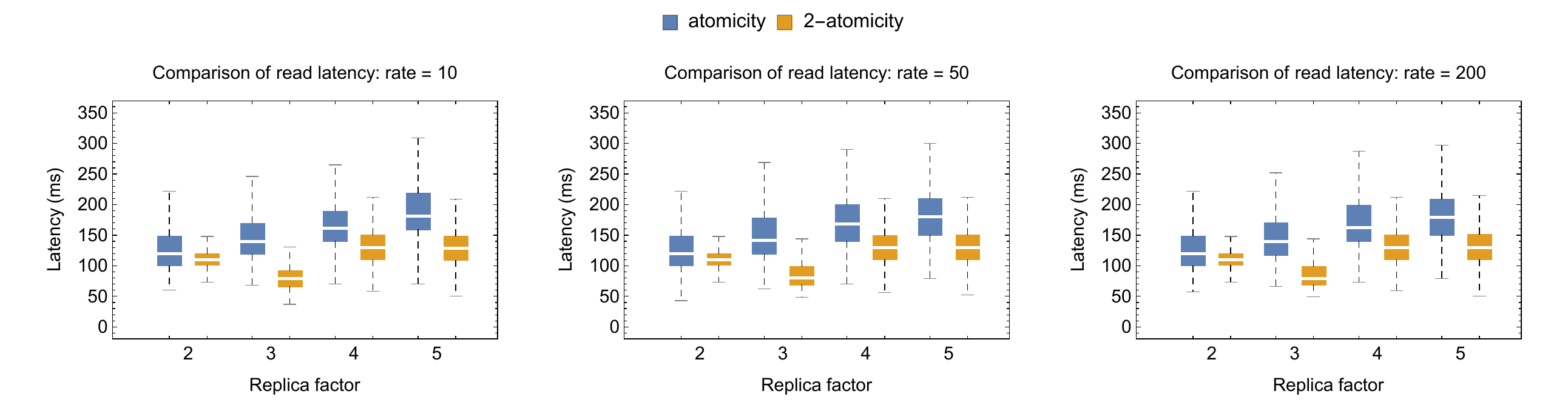}
    \caption[labelInTOC]{Comparison of \textsl{read} latency in both the ABD
    atomicity algorithm and the 2AM algorithm.}
    \label{fig:experiment-latency}
\end{figure*}

\begin{figure*}[th]
  \centering
    \includegraphics[width = 0.75\textwidth]
    {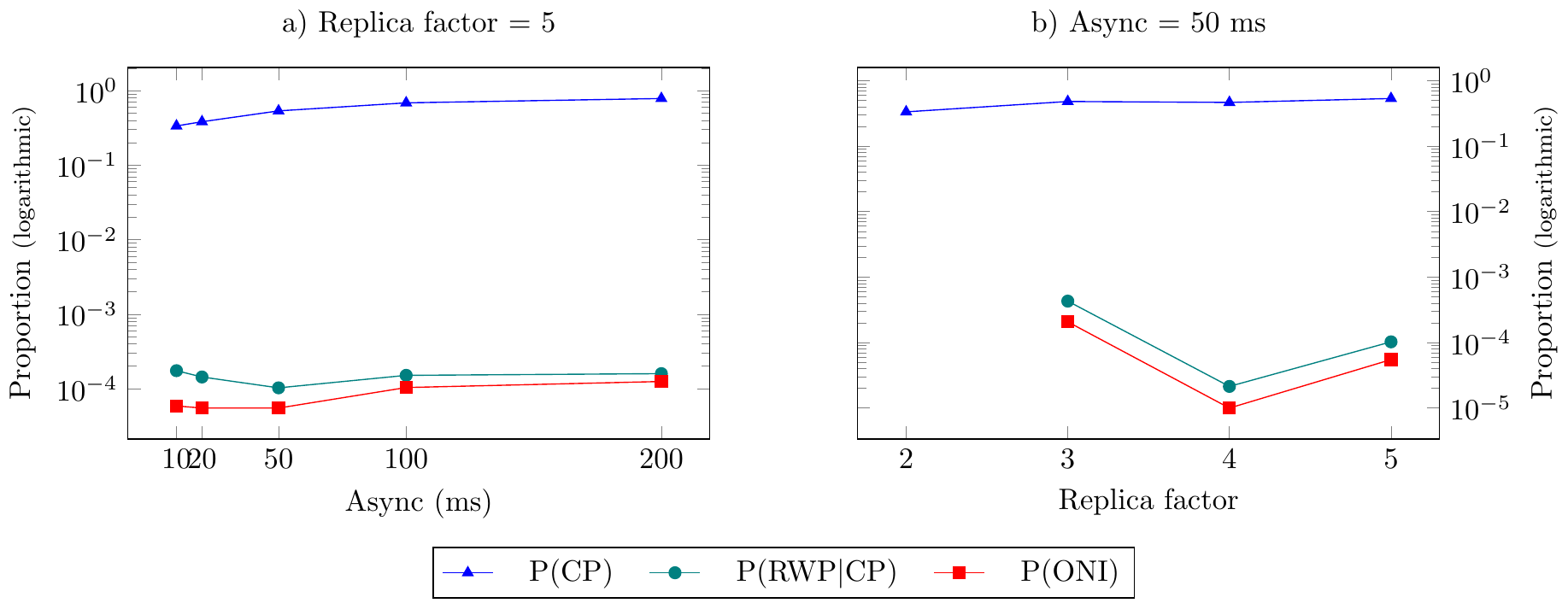}
    \caption[labelInTOC]{The proportions of concurrency patterns
    $\mathrm{P(CP)}$, read-write patterns among concurrency patterns $\mathrm{P(RWP|CP)}$, and
    ``old-new inversions'' $\mathrm{P(ONI)}$. (\emph{$\log 0$ is not defined and
    thus not shown.})}
    \label{fig:experiment-oni}
\end{figure*}

\emph{Latency:} We compare the \textsl{read} latency in both algorithms by
varying the replication factors and the issue rates of operations in the workload.
Each client issues \textsl{reads/writes} at a Poisson rate $\lambda$ (= 5, 10,
20, 50, 100, or 200) per second.
For each $\lambda$, the replication factors vary from 2 to 5.
Each reader issues 50, 000 \textsl{read} operations.
The single writer issues only \textsl{write} operations.
In addition, the size of the keyspace is fixed to 1.
The key takes integer values from 0 to 4.

\emph{Violations of atomicity:} We quantify the violations of atomicity
incurred in the 2AM algorithm by varying the replication factors and the network
delays.
The replication factors vary from 2 to 5.
For each replication factor, the injected random delays in network
communication are uniformly distributed over integers in $[0, r)$ ($r$ can be
10, 20, 50, 100, and 200 ms).
Each client issues 200, 000 operations.
The single writer issues only \textsl{write} operations.
On each client, operations arrive at a Poisson rate of 50 per second so that the
system operates at its full capacity.
The size of the keyspace is 1 and the ``hotspot'' key takes integer values from
0 to 4.
\subsection{Experimental Result 1: Latency}	\label{subsection:latency}

We visualize the latency data using box plots
(Figure~\ref{fig:experiment-latency}), where the box indicates the median and
the 25th and 75th percentile scores, while the whiskers indicate variability
outside the lower and upper quartiles.
The medians are marked by the white lines between boxes.
The outliers (probably due to the garbage collection in phones) are not shown.

As indicated in Figure~\ref{fig:experiment-latency}, the \textsl{read} latency
is significantly reduced using the 2AM algorithm which completes
each \textsl{read} in one round-trip.
In the case of 5 replicas, the reduction of the latency is about 29\%.

Figure~\ref{fig:experiment-latency} also shows that the issue rate of
operations on each client has \emph{little} impact on the \textsl{read} latency.
This is due to the fact that in both algorithms, \textsl{reads} or
\textsl{writes} proceed independently, especially without waiting for
each other (the cases of ``rate = 5'', ``rate = 20'', and ``rate = 100'' are
thus not shown).
On the other hand, the more the replicas are involved, the higher
the \textsl{read} latency is incurred.
This is because each \textsl{read} needs to contact all the replicas and waits
for acknowledgments from a majority of them.
\subsection{Experimental Result 2: Atomicity Violations}

To measure the proportion of the atomicity violations incurred in the 2AM
algorithm, we count the number of \textsl{read} operations ($\mathrm{\#R}$) and the occurrences of
concurrency patterns ($\mathrm{\#CP}$) and read-write patterns ($\mathrm{\#RWP}$).
Because each concurrency pattern or each read-write pattern is associated with
some \textsl{read} operation $r$, we are concerned with the following
quantities:
{\small
\[
  \mathrm{P(CP) = \frac{\#CP}{\#R}}, \mathrm{P(RWP|CP) =
  \frac{\#RWP}{\#CP}}, \mathrm{P(ONI) = \frac{\#RWP}{\#R}}
\]
}

In this manner, the proportion of ``old-new inversions" (and
thus the violations of atomicity) $\mathrm{P(ONI)}$ equals the product of
$\mathrm{P(CP)}$ and $\mathrm{P(RWP|CP)}$:
\begin{align}	\label{eqn:oni-product}
  \mathrm{P(ONI)} = \mathrm{P(CP)} \cdot \mathrm{P(RWP|CP)}.
\end{align}
Notice that this is not the case in theory, according to
Equation~(\ref{equ:oni-factors}).
Therefore, Equation~(\ref{eqn:oni-product}) is a practical approximation to
Equation~(\ref{equ:oni-factors}) in theory, without going into the details of
conditioning on $\mathrm{R' = m}$.
The feasibility of such an approximation will be justified by
the experimental results presented shortly, in the sense that the
key observations drawn from the numerical results based on the equations in
theory fit well with the empirical data and Equation~(\ref{eqn:oni-product}).

\begin{table*}[!t]
  \renewcommand{\arraystretch}{1.2}
  \caption{The numbers and proportions of concurrency patterns and read-write
  patterns (\emph{replication factor = 5)}.}
  \label{tbl:cp-rwp-oni-replica5}
  \centering
  \begin{tabular}{|>{\boldmath$}c<{$}||>{$}c<{$}|>{$}c<{$}|>{$}c<{$}||>{$}c<{$}|>{$}c<{$}|>{$}c<{$}|>{$}c<{$}|}
    \hline
    \text{\bfseries \# async (ms)}
    & \begin{tabular}[c]{@{}c@{}}\text{\bfseries\# \textsl{read}}\\
    \text{\bfseries operations}\end{tabular}
    & \begin{tabular}[c]{@{}c@{}}\text{\bfseries\# concurrency}\\
    \text{\bfseries patterns}\end{tabular}
    & \begin{tabular}[c]{@{}c@{}}\text{\bfseries\# read-write}\\
    \text{\bfseries patterns}\end{tabular}
    & \text{\bfseries $\textrm{P(CP)}$} & \text{\bfseries P(RWP${\bf\mid}$CP)}
    & \text{\bfseries $\textrm{P(ONI)}$}
    \\ \hline \hline
    10 & 800, 000 & 269, 061 & 47 & 0.336326 & 0.000174682 & 0.00005875 \\
    \hline
    20 & 800, 000 & 306, 274 & 44 & 0.382843 & 0.000143662 & 0.000055 \\
    \hline
    50 & 800, 000 & 428, 344 & 44 & 0.53543 & 0.000102721 & 0.000055 \\
    \hline
    100 & 800, 000 & 549, 102 & 83 & 0.686378 &¡¡0.000151156 & 0.00010375 \\
    \hline
    200 & 800, 000 & 627, 814 & 100 & 0.784768 & 0.000159283 & 0.000125 \\
    \hline
  \end{tabular}
\end{table*}

\begin{table*}[!t]
  \renewcommand{\arraystretch}{1.2}
  \caption{The numbers and proportions of concurrency patterns and
  read-write patterns (\emph{async = 50 ms}).}
  \label{tbl:cp-rwp-oni-async50}
  \centering
  \begin{tabular}{|>{\boldmath$}c<{$}||>{$}c<{$}|>{$}c<{$}|>{$}c<{$}||>{$}c<{$}|>{$}c<{$}|>{$}c<{$}|>{$}c<{$}|}
    \hline
    \text{\bfseries \# replicas}
    & \begin{tabular}[c]{@{}c@{}}\text{\bfseries\# \textsl{read}}\\
    \text{\bfseries operations}\end{tabular}
    & \begin{tabular}[c]{@{}c@{}}\text{\bfseries\# concurrency}\\
    \text{\bfseries patterns}\end{tabular}
    & \begin{tabular}[c]{@{}c@{}}\text{\bfseries\# read-write}\\
    \text{\bfseries patterns}\end{tabular}
    & \text{\bfseries $\textrm{P(CP)}$} & \text{\bfseries P(RWP${\bf\mid}$CP)}
    & \text{\bfseries $\textrm{P(ONI)}$}
    \\ \hline \hline
    2 & 200, 000 & 66, 985 & 0 & 0.334925 & 0 & 0 \\ \hline
    3 & 400, 000 & 192, 902 & 83 & 0.482255 & 0.00043027 & 0.0002075 \\ \hline
    4 & 600, 000 & 280, 091 & 6 & 0.466818 & 0.0000214216 & 0.00001 \\ \hline
    5 & 800, 000 & 428, 344 & 44 & 0.53543 & 0.000102721 & 0.000055 \\ \hline
  \end{tabular}
\end{table*}

Due to the limited space, Tables~\ref{tbl:cp-rwp-oni-replica5} and
\ref{tbl:cp-rwp-oni-async50} summarize part of the experimental results (also
shown in Figure \ref{fig:experiment-oni}).
In Table~\ref{tbl:cp-rwp-oni-replica5}, the replication factor is 5 (thus the
number of \textsl{read} operations is 800, 000) and the parameter of
async varies from 10 ms to 200 ms.
In Table~\ref{tbl:cp-rwp-oni-async50}, the parameter of async is 50 ms and the
replication factors vary from 2 to 5.
As shown in Table~\ref{tbl:cp-rwp-oni-replica5}, the higher the degree of
asynchrony is, the more concurrency patterns there are.
On the other hand, the \emph{number} of occurrences of concurrency patterns
grows as the replication factor increases (Table \ref{tbl:cp-rwp-oni-async50}).
Accordingly, the \emph{proportion} of concurrency patterns $\mathrm{P(CP)}$
also increases along with the replication factor, as implied by
Equation~(\ref{equ:concurrency-pattern-summation}).

For the number of read-write patterns, the experimental results exhibit three
features.
First, no read-write patterns (and thus no ``old-new inversions'') arise in
only 2 replicas.
This is because both \textsl{read} and \textsl{write} operations are
required to contact both replicas to complete.
Second, there are fewer read-write patterns in the case of 4 replicas than
those in the case of 3 or 5 replicas.
In the case of 4 replicas, each \textsl{read} contacts 3 replicas according
to the mechanism of the majority quorum system, accounting for 75\% of them, and
gains more opportunities to obtain the latest data version.
For 3 or 5 replicas, the majorities account for 66.7\% and 60\%, respectively.
(Notice that the majority accounts for 100\% in the case of 2 replicas.)
Third, Table~\ref{tbl:cp-rwp-oni-replica5} shows that the degree of asynchrony
also contributes to the occurrences of \mbox{read-write} patterns since it may
lead to out-of-order message delivery in the timed balls-into-bins model
(Section~\ref{subsection:quantifying-read-write-pattern}).

One of the most important observations concerning these experiments is that they
have confirmed our theoretical analysis in Section~\ref{subsection:analysis}.
First, the proportion of ``old-new inversions'' $\mathrm{P(ONI)}$
is quite small (less than 0.1\text{\textperthousand} in most executions),
demonstrating that \mbox{2-atomicity} is ``good enough'' in data storage systems
regarding the violations of atomicity.
More importantly, the proportion of read-write patterns among concurrency
patterns $\mathrm{P(RWP|CP)}$ is much less than that of concurrency
patterns $\mathrm{P(CP)}$ themselves.
Namely, although concurrency patterns appear frequently (e.g.,
accounting for more than $50\%$ in the setting of 5 replicas and 50 ms async),
only a quite small portion of them satisfy the read-write semantics of
read-write pattern (Definition \ref{def:read-write-pattern}) to constitute the
``old-new inversions'' (e.g., about 0.1\text{\textperthousand} in the same
setting).
It follows that the read-write patterns dominate in guaranteeing such rare
atomicity violations incurred in the 2AM algorithm.

In conclusion, the experimental results (which have confirmed the theoretical
analysis) show that 2-atomicity and the 2AM algorithm are ``good enough''
in distributed storage systems, by achieving low latency, bounded staleness, and
rare atomicity violations.
\section{Related Work}	\label{section:related-work}

We divide the related work into three categories: consistency/latency tradeoff,
complexity of emulating atomic registers, and quantifying weak consistency.

\emph{Consistency/latency tradeoff.}
Designing distributed storage systems involve a range of tradeoffs among, for
instance, consistency, latency, availability, and fault-tolerance.
The well-known CAP theorem \cite{Brewer00} states that it is impossible for any
distributed data storage system to achieve consistency, availability, and
network-partition tolerance simultaneously.
More recently, another tradeoff --- between consistency and
latency --- has been considered more influential on the designs of distributed
storage systems, as it is present at all times during system operation
\cite{Abadi12}.

In this paper we study the consistency/latency tradeoff and propose the notion
of almost strong consistency as a better balance option for it.

\emph{Complexity of emulating atomic registers.}
The ABD algorithm for atomicity \cite{abd-Attiya95-jacm} \cite{Attiya10}
emulates the atomic, single-writer multi-reader registers in unreliable,
asynchronous networks, given that a minority of nodes may fail.
It requires each \textsl{read} to complete in two round-trips.
Dutta \emph{et al.} \cite{Dutta04} proved that it is
\emph{impossible} to obtain a \emph{fast} emulation, where both \textsl{reads}
and \textsl{writes} complete in \emph{one} \mbox{round-trip} (i.e., low latency
in our terms).
Georgiou \emph{et al.} \cite{Georgiou08} studied the semi-fast emulations (of
atomic, single-writer multi-reader registers) where \emph{most}
\textsl{reads} complete in one \mbox{round-trip}.
Guerraoui \emph{et al.} considered the best-cases complexity,
assuming synchrony, no or few failures, and absence of \textsl{read/write}
contention.
In this situation, fast emulations do exist \cite{Guerraoui07}.

We investigate the notion of almost strong consistency in terms of
\mbox{2-atomicity}, namely, to emulate \mbox{2-atomic}, single-writer
multi-reader registers.
Our 2AM algorithm completes both \textsl{reads} and \textsl{writes} in one
round-trip.

\emph{Quantifying weak consistency.}
Weak consistency can be quantified from four perspectives: data versions,
randomness, timeliness, and numerical values.
Modern distributed storage systems often settle for weak consistency and allow
\textsl{reads} to obtain data of stale versions \cite{amazon-Vogels07-sosp},
\cite{pnuts-Cooper08-vldb}.
The semantics of $k$-atomicity \cite{Aiyer05} guarantees that the data returned
is of a bounded staleness.
Without guarantee of bounded staleness, random registers \cite{Lee05} provide
a probability distribution over the set of out-of-date values that may be
returned.
Using PBS (Probabilistically Bounded Staleness) \cite{Bailis12}, one can
obtain the probability of reading one of the latest $k$ versions of a data item.
Timed consistency models \cite{Torres-Rojas99} require \textsl{writes}
to be globally visible within a period of time.
PBS \cite{Bailis12} also calculates the probability of reading a \textsl{write}
$t$ seconds after it returns.
TACT \cite{Yu02}, a continuous consistency model, integrates the metric on
numerical error with staleness.

The \mbox{2-atomicity} (and almost strong consistency) semantics integrates
bounded staleness of versions with randomness.
Our 2AM algorithm completes each \textsl{read} in one round-trip, in
contrast to that of \mbox{$k$-atomicity} \cite{Aiyer05}.
It differs from random registers \cite{Lee05} and PBS \cite{Bailis12} in two
aspects:
First, it provides guarantee of deterministically bounded staleness.
Second, the rate of violations is quantified with respect to
atomicity instead of regularity (as in \cite{Lee05} and \cite{Bailis12}), which
is more challenging since we shall deal with \emph{concurrent} operations.
To do this, we propose a stochastic queueing model for analyzing the
concurrency pattern first and then a timed balls-into-bins model for
analyzing the \mbox{read-write} pattern.
\section{Conclusion and Future Work}	\label{section:conclusion}

In this paper we propose the notion of \emph{almost strong consistency}
as a better balance option for the consistency/latency tradeoff.
It provides both deterministically bounded staleness of data versions for each
\textsl{read} and probabilistic quantification on the rate of ``reading stale
values'', while achieving low latency.
In the context of distributed storage systems, we investigate almost strong
consistency in terms of \mbox{\emph{2-atomicity}}.
Our 2AM (\mbox{2-Atomicity} Maintenance) algorithm completes both
\textsl{reads} and \textsl{writes} in \emph{one} communication round-trip,
and guarantees that each \textsl{read} obtains the value of within the latest 2
versions.
We also quantify the rate of atomicity violations incurred in the 2AM
algorithm, both analytically and experimentally.

We identify three problems for future work.
First, it is worthwhile to conduct more intensive simulations or experiments, in
order to reveal the key parameters and guiding principles for distributed
storage system design.
Second, we plan to study \mbox{2-atomic}, \emph{multi-writer} multi-reader
registers.
One key problem is whether they admit implementations which complete
both \textsl{reads} and \textsl{writes} in one \mbox{round-trip}.
Finally, we hope to extend the notion of almost strong consistency from
shared registers to snapshot objects.

\section{Acknowledgments}

This work is supported by the National 973 Program of China (2015CB352202) and
the National Science Foundation of China (61272047, 91318301, 61321491).
The authors thank the users from MathOverflow \cite{mathoverflow1} \cite{mathoverflow2}
for helpful discussions on the calculations in Section
\ref{section:quantifying}.
\bibliographystyle{abbrv}
\bibliography{asc}

\begin{thebibliography}{10}

\bibitem{mathoverflow1}
Mathoverflow (no. 163869).
\newblock \url{http://mathoverflow.net/q/163869/28199}.
\newblock Accessed: 07-01-2015.

\bibitem{mathoverflow2}
Mathoverflow (no. 207800).
\newblock \url{http://mathoverflow.net/q/207800/28199}.
\newblock Accessed: 07-01-2015.

\bibitem{Uber}
Uber.
\newblock \url{https://www.uber.com/}.
\newblock Accessed: 07-01-2015.

\bibitem{Abadi12}
D.~Abadi.
\newblock Consistency tradeoffs in modern distributed database system design.
\newblock {\em IEEE Computer}, 45(2):37--42, 2012.

\bibitem{Aiyer05}
A.~Aiyer, L.~Alvisi, and R.~A. Bazzi.
\newblock On the availability of non-strict quorum systems.
\newblock In {\em Proceedings of the 19th International Conference on
  Distributed Computing}, pages 48--62, 2005.

\bibitem{Attiya10}
H.~Attiya.
\newblock Robust simulation of shared memory: 20 years after.
\newblock {\em Bulletin of the EATCS'10}, pages 99--113, 2010.

\bibitem{abd-Attiya95-jacm}
H.~Attiya, A.~Bar-Noy, and D.~Dolev.
\newblock Sharing memory robustly in message-passing systems.
\newblock {\em Journal of the ACM}, 42(1):124--142, Jan. 1995.

\bibitem{dc-Attiya04-book}
H.~Attiya and J.~Welch.
\newblock {\em Distributed Computing: Fundamentals, Simulations and Advanced
  Topics}.
\newblock John Wiley \& Sons, 2004.

\bibitem{Bailis12}
P.~Bailis, S.~Venkataraman, M.~J. Franklin, J.~M. Hellerstein, and I.~Stoica.
\newblock Probabilistically bounded staleness for practical partial quorums.
\newblock {\em Proceedings of the VLDB Endowment}, 5(8):776--787, Apr. 2012.

\bibitem{facebook-Beaver10-osdi}
D.~Beaver, S.~Kumar, H.~C. Li, J.~Sobel, and P.~Vajgel.
\newblock Finding a needle in haystack: Facebook's photo storage.
\newblock In {\em Proceedings of the 9th USENIX Conference on Operating Systems
  Design and Implementation}, 2010.

\bibitem{Brewer00}
E.~Brewer.
\newblock Towards robust distributed systems.
\newblock In {\em Proceedings of the annual ACM SIGACT-SIGOPS Symposium on
  Principles of Distributed Computing (invited talk)}, July 2000.

\bibitem{Google09}
J.~Brutlag.
\newblock Speed matters for google web search.
\newblock \url{http://services.google.com/fh/files/blogs/google\_delayexp.pdf}.
\newblock Accessed: 07-01-2015.

\bibitem{bigtable-Chang06-osdi}
F.~Chang, J.~Dean, S.~Ghemawat, W.~C. Hsieh, D.~A. Wallach, M.~Burrows,
  T.~Chandra, A.~Fikes, and R.~E. Gruber.
\newblock Bigtable: A distributed storage system for structured data.
\newblock In {\em Proceedings of the 7th USENIX Symposium on Operating Systems
  Design and Implementation}, 2006.

\bibitem{pnuts-Cooper08-vldb}
B.~Cooper, R.~Ramakrishnan, U.~Srivastava, and et~al.
\newblock {PNUTS}: Yahoo!'s hosted data serving platform.
\newblock {\em Proceedings of the VLDB Endowment}, 1(2):1277--1288, 2008.

\bibitem{amazon-Vogels07-sosp}
G.~DeCandia, D.~Hastorun, M.~Jampani, G.~Kakulapati, A.~Lakshman, A.~Pilchin,
  S.~Sivasubramanian, P.~Vosshall, and W.~Vogels.
\newblock Dynamo: Amazon's highly available key-value store.
\newblock In {\em Proceedings of the 21st ACM SIGOPS Symposium on Operating
  Systems Principles}, 2007.

\bibitem{Dutta04}
P.~Dutta, R.~Guerraoui, R.~R. Levy, and A.~Chakraborty.
\newblock How fast can a distributed atomic read be?
\newblock In {\em Proceedings of the 23rd Annual ACM Symposium on Principles of
  Distributed Computing}, pages 236--245, 2004.

\bibitem{Georgiou08}
C.~Georgiou, N.~Nicolaou, and A.~A. Shvartsman.
\newblock On the robustness of (semi) fast quorum-based implementations of
  atomic shared memory.
\newblock In {\em Proceedings of the 27th ACM Symposium on Principles of
  Distributed Computing}, pages 425--425, 2008.

\bibitem{Guerraoui07}
R.~Guerraoui and M.~Vukoli\'{C}.
\newblock Refined quorum systems.
\newblock In {\em Proceedings of the 26th Annual ACM Symposium on Principles of
  Distributed Computing}, pages 119--128, 2007.

\bibitem{linearizability-Herlihy90-toplas}
M.~P. Herlihy and J.~M. Wing.
\newblock Linearizability: a correctness condition for concurrent objects.
\newblock {\em ACM Trans. Program. Lang. Syst.}, 12(3):463--492, July 1990.

\bibitem{YuHuang-tpds10}
Y.~Huang, J.~Cao, B.~Jin, X.~Tao, J.~Lu, and Y.~Feng.
\newblock Flexible cache consistency maintenance over wireless ad hoc networks.
\newblock {\em {IEEE} Trans. Parallel Distrib. Syst.}, 21(8):1150--1161, 2010.

\bibitem{interprocess-Lamport86-dc}
L.~Lamport.
\newblock On interprocess communication.
\newblock {\em Distrib. Comput.}, 1(2):77--101, June 1986.

\bibitem{Lee05}
H.~Lee and J.~L. Welch.
\newblock Randomized registers and iterative algorithms.
\newblock {\em Distrib. Comput.}, 17(3):209--221, Mar. 2005.

\bibitem{Lloyd11}
W.~Lloyd, M.~J. Freedman, M.~Kaminsky, and D.~G. Andersen.
\newblock Don't settle for eventual: Scalable causal consistency for wide-area
  storage with cops.
\newblock In {\em Proceedings of the 23th ACM Symposium on Operating Systems
  Principles}, pages 401--416, 2011.

\bibitem{Ross10}
S.~M. Ross.
\newblock {\em Introduction to Probability Models}.
\newblock Academic Press, tenth edition, 2010.

\bibitem{Torres-Rojas99}
F.~J. Torres-Rojas, M.~Ahamad, and M.~Raynal.
\newblock Timed consistency for shared distributed objects.
\newblock In {\em Proceedings of the 8th Annual ACM Symposium on Principles of
  Distributed Computing}, pages 163--172, 1999.

\bibitem{Yu02}
H.~Yu and A.~Vahdat.
\newblock Design and evaluation of a conit-based continuous consistency model
  for replicated services.
\newblock {\em ACM Trans. Comput. Syst.}, 20(3):239--282, Aug. 2002.

\end{thebibliography}
\clearpage 

\begin{appendix}

\section{Calculations of $\boldsymbol{\mathbb{P}(E_{N-1,m})}$ in Section 4.1}
\label{appendix:calculation}

In this section, we compute the probability of the event, denoted $E_{N-1,m}$,
that there are totally $m$ \textsl{read} operations in $N-1$ queues (besides
$Q_0$) which finish during the time period $[w_{st}, r_{st}]$ (\emph{Step 3}
in Section \ref{subsection:quantifying-concurrency-pattern}).

We first consider a single queue.
Let $D$ be a random variable denoting the number of operations in \emph{one}
particular queue which finish during the time period $r_{st} - w_{st}$.
Its probability distribution $\mathbb{P}(D = d)$ is given in Appendix
\ref{appendix:single-queue}.
Then, we take into account all the $N-1$ ($N>1$) queues, besides $Q_0$.
The calculations of $\mathbb{P}(E_{N-1,m})$ are given in Appendix
\ref{appendix:balls-into-bins}.

\subsection{Calculations of $\boldsymbol{\mathbb{P}(D = d)}$}
\label{appendix:single-queue}

Let $D$ be a random variable denoting the number of operations in \emph{one}
particular queue which finish during the time period $L = r_{st} - w_{st}$.
To compute its probability distribution, we condition on whether $w$ sees this
queue as empty (denoted as an event $E_{\emptyset}$) or not (denoted as an event
$E_{\neq \emptyset}$).

1) If it sees this queue empty (with probability $a_0 = \frac{\mu}{\mu +
\lambda}$), then the number of departures, during the time period $r_{st} -
w_{st}$, has the conditional distribution:
\begin{align*}
  \mathbb{P}&(D = d \mid E_{\emptyset}) \\
  &= \left\{
    \begin{array}{ll}
     \mathbb{P}(L < A_0 + S_0) &\textrm{if $d = 0$} \\ \\
     \mathbb{P} \big(\smashoperator[r]{\sum_{i=1}^{d}} (A_i + S_i) \le L <
     \sum_{i=1}^{d + 1} (A_i + S_i) \big) &\textrm{if $d \geq 1$}
    \end{array} \right. \\ \\
  &= \left\{
    \begin{array}{ll}
      \frac{2 \lambda + \mu}{2 (\lambda + \mu)} &\textrm{if $d = 0$} \\ \\
      (1 - \frac{1}{2} \frac{\mu}{\mu + \lambda}) (\frac{1}{2}
      \frac{\mu}{\mu + \lambda})^{d} \qquad &\textrm{if $d \geq 1$}
    \end{array} \right.
\end{align*}

where $A_i$ are independent and identically distributed
(iid) exponential random variables with parameter $\lambda$ corresponding to the
inter-arrival times of operations in the other queue, and $S_i$ are iid
exponential random variables with parameter $\mu$ corresponding to the service
time of these operations.

Here we briefly demonstrate the calculation of
\[
  \mathbb{P} \big(\smashoperator[r]{\sum_{i=1}^{d}} (A_i + S_i) \le L <
  \sum_{i=1}^{d + 1} (A_i + S_i) \big) \quad (\textrm{when } d \geq 1).
\]
For convenience, we write
\[
  R_{d} \triangleq \smashoperator[r]{\sum_{i=1}^{d}} (A_i + S_i) \textrm{ and }
  R_{d+1} \triangleq \sum_{i=1}^{d + 1} (A_i + S_i).
\]

As $L$ is an exponential random variable with parameter $\lambda$ and is
independent of $R_d$, we have
\[
  \mathbb{P}(R_d \le L) = \int P(L \geq x) d P_{R_d}(x) =
  \mathbb{E}(e^{-\lambda R_d}).
\]
It follows from the independence assumptions that,
\begin{align*}
  \mathbb{P} &(R_d \le L < R_{d+1}) = \mathbb{E}(e^{-\lambda R_d} - e^{-\lambda
  R_{d+1}}) \\
  &= \prod_{i=1}^{d} \mathbb{E}(e^{-\lambda (A_i + S_i)}) - \prod_{i=1}^{d+1}
  \mathbb{E}(e^{-\lambda (A_i + S_i)}) \\
  &= \big( \mathbb{E}(e^{-\lambda(A_1 + S_1)}) \big)^{d} -
  \big( \mathbb{E}(e^{-\lambda(A_1 + S_1)}) \big)^{d+1} \\
  &= (\frac{1}{2} \frac{\mu}{\mu + \lambda})^{d} - (\frac{1}{2} \frac{\mu}{\mu +
  \lambda})^{d+1} \\
  &= (1 - \frac{1}{2} \frac{\mu}{\mu + \lambda}) (\frac{1}{2} \frac{\mu}{\mu +
  \lambda})^{d}
\end{align*}

2) Similarly, if it sees this queue full (with probability $a_1 =
\frac{\lambda}{\mu + \lambda}$), we have
\begin{align*}
  \mathbb{P} &(D = d \mid E_{\neq \emptyset}) \\
  &= \left\{
    \begin{array}{ll}
      \mathbb{P}(L < S_0) &\textrm{if $d = 0$} \\ \\
      \mathbb{P} \big(\smashoperator[r]{\sum_{i=1}^{d}} S_i + \sum_{i=1}^{d-1}
      A_i \le L
      \\ \qquad \qquad < \smashoperator[r]{\sum_{i=1}^{d+1}} S_i +
      \sum_{i=1}^{d} A_i \big) &\textrm{if $d \geq 1$} \end{array} \right. \\ \\
  &= \left\{
    \begin{array}{ll}
      \frac{\lambda}{\mu + \lambda} \qquad &\textrm{if $d = 0$}  \\ \\
      \frac{\mu + 2\lambda}{\mu + \lambda} (\frac{1}{2} \frac{\mu}{\mu +
      \lambda})^{d} \qquad &\textrm{if $d \geq 1$}
    \end{array} \right.
\end{align*}

Using the law of total probability, we obtain
\begin{align} \label{eqn:D=d}
  \mathbb{P}&(D = d)
  \nonumber\\
  &= \frac{\mu}{\mu + \lambda} \mathbb{P}(D = d \mid E_{\emptyset}) +
  \frac{\lambda}{\mu + \lambda} \mathbb{P}(D = d \mid E_{\neq \emptyset})
  \nonumber\\
  &= \left\{
    \begin{array}{ll}
      \frac{1}{2} \big( 1 + (\frac{\lambda}{\mu + \lambda})^2 \big) &\textrm{if $d = 0$}
      \\ \\
      \frac{(2\lambda + \mu)^2}{2(\mu + \lambda)^2} (\frac{1}{2}
      \frac{\mu}{\mu + \lambda})^{d} \qquad &\textrm{if $d \geq 1$} \end{array}
      \right.
\end{align}
\subsection{Calculations of $\boldsymbol{\mathbb{P}(E_{N-1,m})}$}
\label{appendix:balls-into-bins}

Taking into account all the $N-1$ ($N > 1$) queues, besides $Q_0$, we can
compute the probability of the event, denoted $E_{N-1, m}$, that there are
exactly $m$ \textsl{read} operations which finish during $L = r_{st} - w_{st}$
by modeling it as a balls-into-bins problem.

There are $N-1$ bins, labeled with $1, 2, \ldots, N-1$. Let $X_i$ be a random
variable denoting the number of balls contained in the $i$-th bin. The
collection of random variables $X_i$ is independent and identically distributed,
with the same probability distribution

\begin{displaymath}
  p_x = \mathbb{P}(X_i = x) = \left\{ \begin{array}{ll}
    \frac{1}{2} \left( 1 + (\frac{\lambda}{\mu + \lambda})^2 \right) &\textrm{if
    $x = 0$}
    \\ \\
    \frac{(2 \lambda + \mu)^2}{2\left(\mu + \lambda\right)^2} \cdot
    \left(\frac{1}{2} \frac{\mu}{\mu + \lambda} \right)^{x}& \textrm{if $x \geq 1$}\\
  \end{array} \right.
\end{displaymath}

We want to compute the probability of the event, denoted $E_{N-1,m}$, that there
are in total $m$ balls in these $N-1$ bins. For convenience, we write

\[
  r \triangleq \frac{(2 \lambda + \mu)^2}{2(\mu + \lambda)^2} \textrm{ and } s
  \triangleq \frac{1}{2} \frac{\mu}{\mu + \lambda}.
\]


%
%
%
%
%

  First assume $m > 0$.
  Let $K$ be a random variable denoting the number of empty bins. Suppose there
  are $k$ ($0 \le k \le N-2$) empty bins (i.e., $K = k$).
  In this case, we are partitioning integer $m$ into a sum of $N-1$ integers
  such that $k$ of them are 0 and $N-1-k$ of them are positive.
  There are $\binom{N-1}{k} \binom{m-1}{N-k-2}$ ways of partitions.
  For each partition
  \[
    m = m_1 + m_2 + \cdots + m_k + m_{k+1} + m_{k+2} + \cdots + m_{N-1}
  \]
  such that $m_i = 0$ for $1 \le i \le k$ and $m_i > 0$ for $k+1 \le i \le N-1$,
  the probability that the $i$-th bin contains $m_i$ balls is
  \[
    p_0^{k} \cdot (r \cdot s^{m_{k+1}}) (r \cdot s^{m_{k+2}}) \cdots (r \cdot
    s^{m_n}) = p_0^{k} \cdot r^{N-1-k} \cdot s^{m}
  \]

  Therefore, the probability that there exist $k$ ($0 \le k \le N-2$) empty bins
  is
 \begin{align*}
    \mathbb{P}(E_{N-1,m}, &K = k)
    \\
    &= \binom{N-1}{k} \binom{m-1}{N-k-2} p_0^{k} r^{N-k-1} s^{m}
 \end{align*}

 Summing over all $k$ yields (recall that $m > 0$)
 \begin{align*}
  \mathbb{P}(E_{N-1,m}) &= \sum_{k=0}^{N-2} \mathbb{P} \big( (E_{N-1,m}, K = k)
  \big)
  \\
  	&= \sum_{k=0}^{N-2} \binom{N-1}{k} \binom{m-1}{N-k-2} p_0^{k} r^{N-k-1}
  	s^{m}
 \end{align*}

 For the special case $m = 0$, we have
 \[
   \mathbb{P}(E_{N-1,0}) = p_0^{N-1}
 \]

\newpage 

\section{Calculations of $\boldsymbol{\mathbb{P} \{ r = R(w') \}}$ in \\ Section
4.2}
\label{appendix:read-write-pattern}

In this section, we compute the probability of $r$ reading from $w'$ (i.e.,
$\mathbb{P} \set{r = R(w')}$).
According to Equation~(\ref{eqn:rwp-in-text}), we shall compute both $\mathbb{P}
\set{r \neq R(w)}$ (see Appendix~\ref{appendix:rwp-r-w}) and $\mathbb{P}\set{r'
\neq R(w) \mid r \neq R(w)}$ (see Appendix~\ref{appendix:rwp-rprime-w}).
For the latter probability, we also introduce a slightly generalized timed
balls-into-bins model in Appendix~\ref{appendix:rwp-generalized-model}.

\subsection{Calculations of $\boldsymbol{\mathbb{P} \{ r \neq R(w) \}}$ in the
timed balls-into-bins model}
\label{appendix:rwp-r-w}

Let $q = \lfloor n/2 \rfloor + 1$.
Denote the delay times for each ball from robot $R_1$ (corresponding to $w$)
sent to each bin $B_i$ by $D'_i$ and the delay times for each ball from robot
$R_2$ (corresponding to $r$) sent to each bin $B_i$ by $D_i$.
Let $M_m = \max \{D_1, D_2, \ldots, D_m \}$.
By symmetry,
\[
  \mathbb{P}(E) = \binom{n}{q}\,\mathbb{P}\left(E, B = \{ 1,\ldots,q
  \}\right).
\]
If $D_1 = M_q$, we shall compute
\begin{align*}
  I_1 &\equiv \mathbb{P}\{D_1^\prime > t + M_q, D_2^\prime > t + D_2,
  \ldots, D_q^\prime > t + D_q, \\
  &D_{q+1} > M_q, D_{q+2} > M_q, \ldots, D_n > M_q\}.
\end{align*}
Conditioning on $M_q = D_1, D_2, \ldots, \textrm{ and } D_q$ and using the
independence assumptions, we obtain:
\begin{align*}
  I_1 = \int_{0}^{\infty} &\idotsint_V e^{-\lambda_{w}(t+s)} \left(\prod_{i=2}^q
  e^{-\lambda_{w}(t+x_i)}\right) e^{-\lambda_{r}(n-q)s} \\
    &f(s,x_2,\ldots,x_q) \,dx_2\ldots dx_q \, ds,
\end{align*}
where
\[
  V = [0,s]^{q-1} \subseteq \mathbb{R}^{q-1},
\]
and
\[
  f(s, x_2, \ldots, x_q) = \mathbf{1}_{[0,\infty)}(s)
  \lambda_{r}e^{-\lambda_{r}s} \prod_{i=2}^{q} \lambda_{r}e^{-\lambda_{r}x_i}
  \mathbf{1}_{[0,s]}(x_i).
\]
Here, $\mathbf{1}_{[0,\infty)}(s)$ and $\mathbf{1}_{[0,s]}(x_i)$ are
indicator functions.

The integral over $x_i$ is:
\begin{align*}
  \int_{0}^{s} e^{-\lambda_{w} (t + x_i)} \, e^{-\lambda_{r} x_i} \, dx_i =
  e^{-\lambda_{w} t} \frac{1-e^{-(\lambda_{w}+\lambda_{r})s}} {\lambda_{w} +\lambda_{r}}.
\end{align*}
By independence of all $x_i$'s, we carry out all the $x_i$ integrals and obtain
\begin{align*}
  I_1 &= e^{-q\lambda_{w}t} \lambda_{r}^q \\
  &\cdot \int_0^\infty e^{-(\lambda_{w}+\lambda_{r})s}\left(\frac{1-e^{-(\lambda_{w}+\lambda_{r})s}}
  {\lambda_{w} +\lambda_{r}} \right)^{q-1} e^{-\lambda_{r}(n-q)s}\,ds
\end{align*}
Making the substitution $y=1-e^{-(\lambda_{w}+\lambda_{r})s}$ yields
\[
  I_1=e^{-q \lambda_{w} t}\alpha^q\,B(q,\alpha (n-q)+1),
\]
where $\alpha = \frac{\lambda_{r}}{\lambda_{w} + \lambda_{r}}$ and $B$ denotes
the Beta function.

Finally, by symmetry, the cases $D_2 = M_q, \ldots, \textrm{and } D_q = M_q$
give the same result, so that
\begin{align*}
  \mathbb{P}(E)&= q \binom{n}{q} e^{-q\lambda_{w}t}\alpha^q \,
  B(q,\alpha(n-q)+1)
  \\
  &= e^{-q\lambda_{w}t} \frac{\alpha^q\,B(q,\alpha(n-q)+1)}{B(q,n-q+1)}.
\end{align*}
\subsection{Generalized timed balls-into-bins model for the case of
$\boldsymbol{r' \neq R(w)}$ conditioning on $\boldsymbol{r \neq R(w)}$}
\label{appendix:rwp-generalized-model}

Given $r \neq R(w)$ and $r' \prec r$, some messages from $w$ are \emph{known} to
reach the replicas later than the time $r'$ has collected enough acknowledgments
and finished.
To calculate $\mathbb{P} \{ r' \neq R(w) \mid r \neq R(w) \}$, we
introduce a slightly \emph{generalized timed balls-into-bins model}.
In the generalized model, at time $t$, robot $R_2$ picks $p$ ($0 < p
\leq n$) bins uniformly at random (without replacement) and sends a ball to
each of them, instead of sending a ball to each of the $n$ bins as before.
The remaining $(n-p)$ unsent balls are used to model the messages that arrive
late.

For the case of $\set{r' \neq R(w) \mid r \neq R(w)}$, we consider the
generalized model in which robots $R_1$ and $R_2$ represent $r'$ and $w$, respectively.
We assume that the random variable $D_{r}$ (resp. $D_{w}$) for time delay is
exponentially distributed with rate $\lambda_r$ (resp. $\lambda_w$).
It remains to calculate the expected time lag between the events that $r'$
and $w$ are issued, i.e., $\mathbb{E} \{ w_{st} - r'_{st} \}$.
This is challenging because there may be more than one such $r'$ following the
concurrency pattern (Definition~\ref{def:concurrency-pattern}) in a single
process.
Nevertheless the probability that there are $k$ ($k \geq 1$) such $r'$s in a
single process decreases exponentially with the ratio $\frac{\mu}{2 (\mu + \lambda)}$,
according to Equation~(\ref{eqn:D=d}) in Appendix~\ref{appendix:single-queue}.
Therefore, we focus on the simple case that there is at most one $r'$ in a
single process.
In this situation, the calculation presented shortly yields that $\mathbb{E} \{
w_{st} - r'_{st} \}= \frac{2 \lambda - \mu}{2 \lambda \mu}$.
Finally, we are interested in the time point $t'$ when exactly $q \triangleq
\lfloor n/2 \rfloor + 1$ of the $n$ bins have received the balls from $R_1$
(i.e., $r'$), and denote the set of these $q$ bins by $B$.
In terms of the \emph{generalized} timed balls-into-bins model, the case of
$\set{r' \neq R(w) \mid r \neq R(w)}$ corresponds to the event $E'$ that none of
the $q$ bins in $B$ receives a ball from $R_2$ (i.e., $w$) before it receives a ball
from $R_1$ (i.e., $r'$).

We calculate the expected time lag between the events that $r'$ and $w$ are
issued (i.e., $\mathbb{E} \{ w_{st} - r'_{st} \}$) as follows.
To this end, we first calculate the expected duration of the interval $[r'_{ft},
r_{st}]$.
Since $r'$ is required to finish between the interval $L = [w_{st}, r_{st}]$
whose length follows an exponential distribution with rate $\lambda$, and the
inter-arrival time (between $r'$ and $r$ here), denoted $I$, also follows an
exponential distribution with rate $\lambda$, we have
\begin{align*}
  \mathbb{E} \{ r_{st} - r'_{ft} \} = \mathbb{E} \{ I \mid I < L \} =
  \frac{1}{2 \lambda}.
\end{align*}

Thus, the expected time lag between the events that $r'$ and $w$ are issued is
\begin{align}	\label{eqn:t2}
  \mathbb{E} \{ w_{st} - r'_{st} \} &= \mathbb{E} \{ r_{st} - r'_{ft} \} + \mathbb{E} \{ r'_{ft} - r'_{st} \} -
  \mathbb{E} \{ r_{st} - w_{st} \}
  \nonumber\\
  &= \frac{1}{2 \lambda} + \frac{1}{u} - \frac{1}{\lambda}
  = \frac{2 \lambda - \mu}{2 \lambda \mu}.
\end{align}
\subsection{Calculations of $\boldsymbol{\mathbb{P} \{ r' \neq R(w) \mid r \neq
R(w) \}}$}
\label{appendix:rwp-rprime-w}

Let $q = \lfloor n/2 \rfloor + 1$.
Denote the delay times for each ball from robot $R_1$ (i.e., $r'$)
sent to each bin $B_i$ by $D'_i$ and the delay times for each ball from robot
$R_2$ (i.e., $w$) sent to each bin $B_i$ by $D_i$.
Let $M_q = \max \{D'_1, D'_2, \ldots, D'_q \}$.
By symmetry,
\[
  \mathbb{P}(E') = \binom{n}{q}\,\mathbb{P}\left(E', B = \{ 1,\ldots,q
  \}\right).
\]

Given $r \neq R(w)$ and $r' \prec r$, we know that $q$ balls from $w$
are bound to reach the replicas later than the time $t'$ of interest.
The other $(n - q)$ (corresponding to the parameter $p$ in the generalized
model) balls are randomly and uniformly sent into $(n - q)$ replicas, one ball
per bin.
We denote this set of $(n-q)$ replicas by $B'$.

The case $n=2$ is trivial: Since $q = n = 2$, these two balls from $w$ are bound
to reach the replicas later than the time $r'$ has collected enough
acknowledgments and returned.
Therefore, $\mathbb{P} \set{r' \neq R(w) \mid r \neq R(w)} = 1.$

Now we consider $n > 2$.
Assume $M_q = D'_1$ (without loss of generality, the corresponding bin for
$D'_1$ is denoted by $b_1$; hence $b_1 \in B$) and $k = |B \cap B'|$ ($0 \leq k \leq n-q$), we distinguish the case $b_1 \in B'$ from $b_1 \notin B'$.
Thus, we shall compute
\begin{align*}
  J_1 \equiv \sum_{k=0}^{n-q} &\Big(\mathbb{P}\{D_1 > M_q - t', D_2 > D'_2 - t',
  \ldots, D_k > D'_k - t' ,
  \\
  &\qquad D'_{q+1} > M_q, D'_{q+2} > M_q, \ldots, D'_n > M_q\}
  \\
  &+  \vphantom{\sum_{k=0}^{n-q}} \mathbb{P} \{D_2 > D'_2 - t', \ldots, D_{k+1} >
  D'_{k+1} - t' ,
  \\
  &\qquad D'_{q+1} > M_q, D'_{q+2} > M_q, \ldots, D'_n > M_q \} \Big),
\end{align*}
where $t' = \mathbb{E} \{ w_{st} - r'_{st} \} = \frac{2 \lambda - \mu}{2 \lambda
\mu}$ (see Equation~(\ref{eqn:t2}) in
Appendix~\ref{appendix:rwp-generalized-model}).

Conditioning on $M_q = D'_1, D'_2, \ldots, \textrm{ and } D'_q$ and using
the independence assumptions, we obtain:
\begin{align} \label{eqn:j1-original}
  J_1 = \sum_{k=0}^{n-q} &\left( \frac{\binom{1}{1} \binom{q-1}{k-1}
  \binom{n-q}{n-q-k}}{\binom{n}{n-q}} \right.
  \nonumber\\
  &\quad \cdot \int_{0}^{\infty} \idotsint_V
  \left(e^{\lambda_{w}(t'-s)}\right)^{\mathbf{1}_{s>t'}(s)}
  \nonumber\\
  &\qquad \cdot \left(\prod_{i=2}^{k} \left(e^{\lambda_{w}(t' -
  x'_i)}\right)^{\mathbf{1}_{x'_i > t'}(x'_i)} \right) e^{-\lambda_{r}(n-q)s}
  \nonumber\\
  &\qquad \cdot f(s,x'_2,\ldots,x'_q) \,dx'_2\ldots dx'_q \, ds
  \nonumber\\
  &+ \frac{\binom{1}{0} \binom{q-1}{k} \binom{n-q}{n-q-k}}{\binom{n}{n-q}}
  \nonumber\\
  &\quad \cdot \int_{0}^{\infty} \idotsint_V \left(\prod_{i=2}^{k+1}
  \left(e^{\lambda_{w}(t' - x'_i)}\right)^{\mathbf{1}_{x'_i > t'}(x'_i)} \right)
  \nonumber\\
  &\qquad \cdot e^{-\lambda_{r}(n-q)s}
  \nonumber\\
  &\left. \qquad \cdot f(s,x'_2,\ldots,x'_q) \,dx'_2\ldots dx'_q \, ds
  \vphantom{\frac{\binom{1}{1} \binom{q-1}{k-1}
  \binom{n-q}{n-q-k}}{\binom{n}{n-q}}}\right),
\end{align}
where
\[
  V = [0,s]^{q-1} \subseteq \mathbb{R}^{q-1},
\]
and
\[
  f(s, x'_2, \ldots, x'_q) = \mathbf{1}_{[0,\infty)}(s) \lambda_{r}e^{-\lambda_{r}s}
  \prod_{i=2}^{q} \lambda_{r}e^{-\lambda_{r} x'_i} \mathbf{1}_{[0,s]}(x'_i).
\]
Notice that $\left(e^{\lambda_{w}(t'-s)}\right)^{\mathbf{1}_{s > t'}(s)}$
denotes a piecewise function with respect to $s$:
\begin{displaymath}
 \left(e^{\lambda_{w}(t'-s)}\right)^{\mathbf{1}_{s > t'}(s)} =
  \begin{cases}
      \hfill e^{\lambda_{w}(t' - s)}    \hfill & \text{ if $s > t'$}; \\
      \hfill 1 \hfill & \text{ if $s \leq t'$}. \\
  \end{cases}
\end{displaymath}
and, similarly,
\begin{displaymath}
 \left(e^{\lambda_{w}(t' - x'_i)}\right)^{\mathbf{1}_{x'_i > t'}(x'_i)} =
  \begin{cases}
      \hfill e^{\lambda_{w}(t' - x'_i)}    \hfill & \text{ if $x'_i > t'$}; \\
      \hfill 1 \hfill & \text{ if $x'_i \leq t'$}. \\
  \end{cases}
\end{displaymath}
For convenience, we denote the first multiple integral in $J_1$ by $J_{11}$ and
the second $J_{12}$, and focus on the calculations of $J_{11}$ in the following.
First of all, we evaluate the leftmost integral of $J_{11}$ over $s$ by breaking
it into two parts:
\begin{align} \label{eqn:j11-twoparts}
  J_{11} = \int_{0}^{t'} g(s) \, ds + \int_{t'}^{\infty} g(s) \, ds,
\end{align}
where $g(s)$ is the integrand in $J_{11}$ with respect to variable $s$.

In the first integral over $s \in [0,t']$, we have $x'_i \leq s \leq t'$ for $i =
2,3, \ldots, q$.
Thus it reduces to
\begin{align} \label{eqn:j11-0-t'}
  \int_{0}^{t'} g(s) \, ds &= \int_{0}^{t'} \idotsint_V e^{-\lambda_{r}(n-q)s}
  \mathbf{1}_{[0,t']}(s) \lambda_{r}e^{-\lambda_{r}s}
  \nonumber \\
  & \qquad \cdot \prod_{i=2}^{q} \lambda_{r}e^{-\lambda_{r}
  x'_i} \mathbf{1}_{[0,s]}(x'_i) \,dx'_2\ldots dx'_q \, ds
  \nonumber \\
  &= \lambda_{r}^{q} \int_{0}^{t'} e^{-\lambda_{r} (n-q+1) s}
  \nonumber \\
  & \qquad \cdot \left(\idotsint_V \prod_{i=2}^{q} e^{-\lambda_{r} x'_i} \,dx'_2\ldots
  dx'_q\right) \, ds
  \nonumber \\
  &= \lambda_{r} \int_{0}^{t'} e^{-\lambda_{r} (n-q+1) s}
  \left(1-e^{-\lambda_{r} s}\right)^{q-1}\, ds.
\end{align}
The second integral over $s \in [t',\infty)$ reduces to
\begin{align*}
  \int_{t'}^{\infty} g(s) \, ds &= \int_{t'}^{\infty} \idotsint_V
  e^{\lambda_{w}(t'-s)}
  \\
  &\qquad \cdot \left(\prod_{i=2}^{k} \left(e^{\lambda_{w}(t' -
  x'_i)}\right)^{\mathbf{1}_{x'_i > t'}(x'_i)} \right) e^{-\lambda_{r}(n-q)s}
  \\
  &\qquad \cdot \mathbf{1}_{[t',\infty)}(s) \lambda_{r}e^{-\lambda_{r}s}
  \prod_{i=2}^{q} \lambda_{r}e^{-\lambda_{r} x'_i} \mathbf{1}_{[0,s]}(x'_i)
  \\
  &\qquad \cdot dx'_2\ldots dx'_q \, ds.
\end{align*}
Each of the $k-1$ integrals over $x'_i$ ($i = 2,3,\ldots, k$) is
\begin{align*}
  \int_{0}^{s} &e^{\lambda_{w} (t'-x'_i)} e^{-\lambda_{r} x'_i} \, dx'_i
  \\
  &= \int_{0}^{t'} e^{-\lambda_{r} x'_i} \, dx'_i + \int_{t'}^{s} e^{\lambda_{w}
  (t'-x'_i)} e^{-\lambda_{r} x'_i} \, dx'_i
  \\
  &= \frac{1-e^{-\lambda_{r} t'}}{\lambda_{r}} + e^{\lambda_{w} t'} \cdot
  \frac{e^{-(\lambda_{w} + \lambda_{r}) t'} - e^{-(\lambda_{w} + \lambda_{r})
  s}}{\lambda_{w} + \lambda_{r}},
\end{align*}
while each of the remaining $(q-k)$ integrals over $x'_i$ ($i = k+1, k+2,
\ldots, q$) is
\begin{align*}
  \int_{0}^{s} &e^{-\lambda_{r} x'_i} \, dx'_i = \frac{1 - e^{-\lambda_{r}
  s}}{\lambda_{r}}.
\end{align*}
Carrying out all the $x'_i$ integrals, we obtain
\begin{align} \label{eqn:j11-t'-infty}
  \int_{t'}^{\infty} &g(s) \, ds = \lambda_{r}^{q} \, e^{\lambda_{w} t'} \int_{t'}^{\infty}
  e^{-(\lambda_{w} + \lambda_{r}) s}
  \nonumber\\
  &\cdot \left( \frac{1-e^{-\lambda_{r} t'}}{\lambda_{r}} + e^{\lambda_{w} t'} \cdot
  \frac{e^{-(\lambda_{w} + \lambda_{r}) t'} - e^{-(\lambda_{w} + \lambda_{r}) s}}{\lambda_{w} + \lambda_{r}}
  \right)^{k-1}
  \nonumber\\
  &\cdot \left(\frac{1 - e^{-\lambda_{r} s}}{\lambda_{r}}\right)^{q-k} \,
  e^{-\lambda_{r}(n-q)s} \,ds.
\end{align}
Substituting Equations~(\ref{eqn:j11-0-t'}) and (\ref{eqn:j11-t'-infty})
into Equation~(\ref{eqn:j11-twoparts}) yields
\begin{align}	\label{eqn:j11}
  J_{11} &= \lambda_{r} \int_{0}^{t'} e^{-\lambda_{r} (n-q+1) s}
  \left(1-e^{-\lambda_{r} s}\right)^{q-1}\, ds
  \nonumber\\
  &\quad + \lambda_{r}^{q} \, e^{\lambda_{w} t'} \int_{t'}^{\infty}
  e^{-(\lambda_{w} + \lambda_{r}) s}
  \nonumber\\
  &\quad\;\; \cdot \left( \frac{1-e^{-\lambda_{r} t'}}{\lambda_{r}} +
  e^{\lambda_{w} t'} \cdot
  \frac{e^{-(\lambda_{w} + \lambda_{r}) t'} - e^{-(\lambda_{w} + \lambda_{r}) s}}{\lambda_{w} + \lambda_{r}}
  \right)^{k-1}
  \nonumber\\
  &\quad\;\; \cdot \left(\frac{1 - e^{-\lambda_{r} s}}{\lambda_{r}}\right)^{q-k}
  \, e^{-\lambda_{r}(n-q)s}\,ds.
\end{align}

In the same way, we have
\begin{align}  \label{eqn:j12}
  J_{12} &= \lambda_{r} \int_{0}^{t'} e^{-\lambda_{r} (n-q+1) s}
  \left(1-e^{-\lambda_{r} s}\right)^{q-1}\, ds
  \nonumber\\
  &\quad + \lambda_{r}^{q} \, \int_{t'}^{\infty} e^{-\lambda_{r} s}
  \nonumber\\
  &\quad\;\; \cdot \left( \frac{1-e^{-\lambda_{r} t'}}{\lambda_{r}} + e^{\lambda_{w}
  t'}
  \cdot \frac{e^{-(\lambda_{w} + \lambda_{r}) t'} - e^{-(\lambda_{w} +
  \lambda_{r}) s}}{\lambda_{w} + \lambda_{r}} \right)^{k}
  \nonumber\\
  &\quad\;\; \cdot \left(\frac{1 - e^{-\lambda_{r} s}}{\lambda_{r}}\right)^{q-1-k}
  \, e^{-\lambda_{r}(n-q)s} \,ds.
\end{align}
Substituting Equations~(\ref{eqn:j11}) and~(\ref{eqn:j12}) into
Equation~(\ref{eqn:j1-original}) yields
\begin{align} \label{eqn:j1}
  J_1 &= \sum_{k=0}^{n-q} \left( \frac{\binom{q-1}{k-1}
  \binom{n-q}{n-q-k}}{\binom{n}{n-q}} J_{11} + \frac{\binom{q-1}{k}
  \binom{n-q}{n-q-k}}{\binom{n}{n-q}} J_{12} \right) \nonumber\\
  &=\lambda_{r} \int_{0}^{t'} e^{-\lambda_{r} (n-q+1) s}
  \left(1-e^{-\lambda_{r} s}\right)^{q-1}\, ds
  \nonumber\\
  &\quad + \sum_{k=0}^{n-q} \frac{\binom{q-1}{k-1}
  \binom{n-q}{n-q-k}}{\binom{n}{n-q}} \lambda_{r}^{q} \, e^{\lambda_{w} t'} \int_{t'}^{\infty}
  e^{-(\lambda_{w} + \lambda_{r}) s}
  \nonumber\\
  &\quad\;\; \cdot \left( \frac{1-e^{-\lambda_{r} t'}}{\lambda_{r}} + e^{\lambda_{w}
  t'} \cdot
  \frac{e^{-(\lambda_{w} + \lambda_{r}) t'} - e^{-(\lambda_{w} + \lambda_{r}) s}}{\lambda_{w} + \lambda_{r}}
  \right)^{k-1}
  \nonumber\\
  &\quad\;\; \cdot \left(\frac{1 - e^{-\lambda_{r} s}}{\lambda_{r}}\right)^{q-k} \,
  e^{-\lambda_{r}(n-q)s}\,ds
  \nonumber\\
  &\quad + \sum_{k=0}^{n-q} \frac{\binom{q-1}{k}
  \binom{n-q}{n-q-k}}{\binom{n}{n-q}} \lambda_{r}^{q} \, \int_{t'}^{\infty} e^{-\lambda_{r} s} \nonumber\\
  &\quad\;\; \cdot \left( \frac{1-e^{-\lambda_{r} t'}}{\lambda_{r}} + e^{\lambda_{w}
  t'}
  \cdot \frac{e^{-(\lambda_{w} + \lambda_{r}) t'} - e^{-(\lambda_{w} +
  \lambda_{r}) s}}{\lambda_{w} + \lambda_{r}} \right)^{k}
  \nonumber\\
  &\quad\;\; \cdot \left(\frac{1 - e^{-\lambda_{r} s}}{\lambda_{r}}\right)^{q-1-k}
  \, e^{-\lambda_{r}(n-q)s} \,ds.
\end{align}
Finally, by symmetry, the cases $D'_2 = M_q, \ldots, \textrm{and } D'_q = M_q$
give the same result, so that
\begin{align}	\label{eqn:p-e'-final}
  \mathbb{P}(E') = \left\{
  \begin{array}{ll}
    q \binom{n}{q} J_1 = \frac{J_1}{B(q,n-q+1)} & \textrm{if } n > 2, \\
    1 & \textrm{if } n = 2.
  \end{array}\right.
\end{align}
\end{appendix}
\end{document}